\documentclass[12pt]{iopart}

\usepackage{graphicx}
\usepackage{bm}
\usepackage{times}
\usepackage{amsmath,amssymb}
\usepackage{amsthm}
\usepackage{enumerate}
\usepackage{multirow}

\usepackage[numbers,sort&compress]{natbib}

\usepackage[colorlinks=true,linkcolor=blue,citecolor=red, linktocpage=true,breaklinks=true]{hyperref}
\usepackage{cleveref}

\newtheorem{theorem}{Theorem}[section]
\newtheorem{lemma}[theorem]{Lemma}
\newtheorem{proposition}[theorem]{Proposition}

\newtheorem{definition}[theorem]{Definition}

\numberwithin{equation}{section}

\newcommand{\ket}[1]{|#1\rangle}
\newcommand{\bra}[1]{\langle #1|}

\begin{document}

\title{Asymptotic optimality of Grover--Radhakrishnan--Korepin algorithm}

\author{Kun Zhang$^{1, 2, 3, 4, \dag}$, Kang-Yuan Chen$^{1}$, Xiao-Hui Wang$^{1, 2, 3, 4}$, Vladimir Korepin$^{5}$}

\address{$^1$ School of Physics, Northwest University, Xi’an 710127, China}
\address{$^2$ Shaanxi Key Laboratory for Theoretical Physics Frontiers, Xi'an 710127, China}
\address{$^3$ Peng Huanwu Center for Fundamental Theory, Xi'an 710127, China}
\address{$^4$ Fundamental Discipline Research Center for Quantum Science and technology of Shaanxi Province, Xi'an 710127, China}
\address{$^5$ C.N. Yang Institute for Theoretical Physics, Stony Brook University, Stony Brook, New York 11794, USA}

\ead{kunzhang@nwu.edu.cn}


\begin{abstract}
Grover's algorithm is a cornerstone of quantum algorithms and is strictly optimal in oracle-query complexity. While the full search problem admits no further improvement, one may trade accuracy for speed in the partial search problem, where the task is to identify only the block containing the target item. The best known quantum algorithm for the partial search problem is the Grover--Radhakrishnan--Korepin (GRK) algorithm, whose optimality has long been conjectured but not proved. In this work, we prove the optimality of GRK in the large-block limit. We formulate partial search as a time-optimal control problem and apply the Pontryagin maximum principle to derive the switching-function dynamics, establish the bang--bang structure of regular extremals, and exclude non-optimal switching patterns. As a result, we show that the optimal regular extremal has the global--local--global form, which yields a control-theoretic proof of the asymptotic optimality of the GRK algorithm in oracle-query complexity.
\end{abstract}
%
%
\noindent{\it Keywords}: Quantum search, Partial search, Optimal control, Pontryagin maximum principle
%
%
%
%

\section{\label{sec:intro}Introduction}

Grover's algorithm solves the unstructured search problem with \(\mathcal O(\sqrt{N})\) oracle queries, yielding a quadratic speedup over classical search \cite{grover1996fast,Grover1997needle}. This query complexity is strictly optimal \cite{boyer1996tight,zalka1999grover}. Beyond its foundational role in quantum algorithms, Grover search has important applications such as the key search \cite{Grassl2016ApplyingGroverAES,Jaques2020GroverAESLowMC,Jang2025QuantumAnalysisAES}. Accordingly, cryptographic schemes whose security is reduced to exhaustive key search are typically expected to lose at most a quadratic security factor against quantum attacks. Grover's algorithm has also been experimentally demonstrated on a variety of quantum computing platforms \cite{Chuang1998,zhang2021implementation,pokharel2024better}.

Since Grover's algorithm is optimal in oracle-query complexity, any genuine improvement over full Grover search must come from relaxing the task itself rather than from improving the standard unstructured search problem. A natural relaxation is the partial search problem, where the goal is to determine only part of the target address, or equivalently the target block in a partitioned database. Grover and Radhakrishnan showed that this task can be solved faster than full quantum search \cite{GroverRadhakrishnan2005PartialSearch}. Their algorithm was later simplified and optimized, leading to the Grover--Radhakrishnan--Korepin (GRK) algorithm \cite{Korepin2005OptimizationPartialSearch,KorepinGrover2006SimplePartialSearch}. The GRK algorithm employs two types of Grover operators, usually called global and local operators, and admits a three-dimensional formulation, in contrast to the two-dimensional formulation of the standard Grover search \cite{KorepinVallilo2006GroupTheoreticalPartialSearch}. Multi-target extensions of partial quantum search were developed in \cite{Choi2007QuantumPS,zhong2009quantum,Zhang2017QuantumPS}. Exact versions of the GRK algorithm with unit success probability were also constructed in \cite{ChoiWalkerBraunstein2007SureSuccessPartialSearch,YeWang2025DeterministicPartialSearchOneSixteenth}.

Since the introduction of the GRK algorithm, the reason for the optimal global--local--global structure has remained unclear. For large databases and large blocks, the number of admissible operator sequences grows exponentially, and whether GRK is optimal among them has remained open. Existing analytical results cover only a few restricted classes of sequences and support the optimal GRK structure \cite{KorepinLiao2006QuestFastPartialSearch}. Recent numerical work provides further strong evidence, but still no proof \cite{JiangWangZhangKorepin2026ExactBoundsPartialSearch}.

In the large-block limit, the search for a partial-search protocol with minimal oracle complexity can be recast as a time-optimal control problem. In this regime, the global and local Grover stages reduce to two fixed generators on a three-dimensional state space, and the oracle-query cost becomes the total control time. The central question is therefore to determine the switching structure of optimal trajectories for this two-generator system. The relevant tool is the Pontryagin maximum principle (PMP), which gives first-order necessary conditions for optimality through a maximized PMP function and switching functions \cite{Pontryagin1962,Liberzon2012,AgrachevSachkov2004}. PMP and related geometric-control methods have been widely used in quantum control, especially in time-optimal state transfer and gate synthesis \cite{DongPetersen2010,KhanejaGlaserBrockett2001,Glaser2015TrainingSchrodinger,Stefanatos2020RolandCerf,BoscainMasonSigalotti2021}. 

In this paper, we resolve the optimality question for the GRK algorithm in the asymptotic regime (large database and large block). We formulate partial search as a time-optimal switching problem, derive the reduced PMP dynamics, and analyze the resulting bang--bang extremals. By excluding non-optimal switching patterns and reducing admissible extremals, we show that the optimal regular structure is global--local--global. This yields a control-theoretic proof of the asymptotic optimality of the GRK algorithm in oracle-query complexity.

The paper is organized as follows. Section \ref{sec:GRK} reviews the GRK algorithm and the three-dimensional reduction. Section \ref{sec:PMP} formulates the time-optimal control problem and introduces the PMP. Section \ref{sec:switching_dynamics} derives the reduced switching dynamics. Section \ref{sec:singluar_arcs} analyzes switching compressions and singular arcs. Section \ref{sec:optimality} presents the final proof of the optimality of the GRK algorithm. Section \ref{sec:conclusion} concludes with a discussion. \ref{app:endpoint} proves the endpoint bang structure used in Sec. \ref{sec:optimality}.

\section{\label{sec:GRK}GRK algorithm and the three-dimensional formulation}

In Sec. \ref{subsec:GRK}, we recall the GRK partial-search algorithm and fix the notation used throughout the paper. In Sec. \ref{subsec:three}, we then describe its standard three-dimensional formulation and the associated infinitesimal generators that will underlie the control-theoretic analysis.

\subsection{\label{subsec:GRK}The GRK partial-search algorithm}

We begin by recalling the standard formulation of quantum partial search and the GRK algorithm. Consider an unstructured database of size $N=2^n$ with a unique marked item $\ket{t}$. In the partial-search problem one does not seek the full $n$-bit target string, but only a partial specification of it. Accordingly, we decompose the target string as $t=t_1t_2$, where $t_1$ has length $n-m$ and $t_2$ has length $m$. We denote the number of blocks and the size of each block by
\begin{equation}\label{eq:block-parameters-sec2}
K=2^{\,n-m},
\qquad
b=2^m=\frac{N}{K}.
\end{equation}
Thus the database is partitioned into $K$ blocks of equal size $b$, and the goal is to identify the block labeled by $t_1$.

Quantum search algorithms use an oracle to identify the target state. Using phase kickback, the oracle can be implemented as $O_t\ket{x}=(-1)^{f(x)}\ket{x}$, where $f(t)=1$ and $f(x)=0$ for $x\neq t$, equivalently
\begin{equation}
O_t=I-2\ket{t}\!\bra{t}.
\end{equation}
The initial state is the uniform superposition
\begin{equation}
\ket{s_n}=\frac{1}{\sqrt N}\sum_{x=0}^{N-1}\ket{x}.
\end{equation}

Another key element of quantum search is the diffusion operator, defined as
\begin{equation}
D_n=2\ket{s_n}\!\bra{s_n}-I,
\end{equation}
and the corresponding Grover operator is
\begin{equation}
G_n=D_nO_t.
\end{equation}
Iterative application of the Grover operator to the initial state $\ket{s_n}$ amplifies the amplitude of the target state, yielding Grover's algorithm \cite{grover1996fast,Grover1997needle}. This requires only $\mathcal O(\sqrt N)$ oracle calls and hence provides a quadratic speedup over classical search.

For partial search one also introduces the local diffusion operator \cite{GroverRadhakrishnan2005PartialSearch}
\begin{equation}
D_m=I_{n-m}\otimes \bigl(2\ket{s_m}\!\bra{s_m}-I_m\bigr),
\end{equation}
where $\ket{s_m}$ is the uniform superposition on a single block. The associated local Grover operator is
\begin{equation}
G_m=D_mO_t.
\end{equation}
Thus $G_n$ acts globally on the full database, whereas $G_m$ performs Grover rotations independently inside each block.

The GRK algorithm has the three-step form $G_n\,G_m^{k_2}\,G_n^{k_1}\ket{s_n}$, namely: first $k_1$ global Grover iterations, then $k_2$ local Grover iterations, and finally one additional global Grover iteration \cite{GroverRadhakrishnan2005PartialSearch}. The success condition for partial search is that the final state has no amplitude on any non-target block. Using the block parameters defined in Eq.~(\ref*{eq:block-parameters-sec2}), this is achieved in the asymptotic large-block regime by choosing
\begin{equation}\label{eq:k1k2-asymp-sec2}
k_1=\frac{\pi}{4}\sqrt{N}-\eta\sqrt{b},
\qquad
k_2=\alpha\sqrt{b},
\end{equation}
with parameters $\eta,\alpha$ constrained by the success condition
\begin{equation}\label{eq:constraint-alpha-eta-sec2}
\tan\!\left(\frac{2\eta}{\sqrt K}\right)
=
\frac{2\sqrt K\,\sin(2\alpha)}{K-4\sin^2\alpha}.
\end{equation}
Optimizing the total number of oracle queries inside the global--local--global family yields the well-known GRK parameters \cite{Korepin2005OptimizationPartialSearch,KorepinGrover2006SimplePartialSearch}
\begin{equation}\label{eq:GRK-optimal-parameters-sec2}
\tan\!\left(\frac{2\eta_K}{\sqrt K}\right)
=
\frac{\sqrt{3K-4}}{K-2},
\qquad
\cos(2\alpha_K)=\frac{K-2}{2(K-1)}.
\end{equation}
Accordingly, the optimized GRK algorithm uses
\begin{equation}
k_1+k_2+1
=
\frac{\pi}{4}\sqrt N-(\eta_K-\alpha_K)\sqrt b+\mathcal O(1)
\end{equation}
queries. In the large-$K$ limit one has $\alpha_K\to\pi/6$ and $\eta_K\to \sqrt 3/2$, hence 
\begin{equation}
k_1=\frac{\pi}{4}\sqrt N-\frac{\sqrt{3b}}{2},
\qquad
k_2=\frac{\pi}{6}\sqrt b.
\end{equation}

Other operator orderings, such as global--local and local--global--local--global patterns, have also been investigated \cite{KorepinLiao2006QuestFastPartialSearch}. More recent exhaustive numerical studies provide strong evidence that, among operator sequences with a fixed total number of oracle queries, the GRK ordering is optimal and is attained within the global--local--global family \cite{JiangWangZhangKorepin2026ExactBoundsPartialSearch}. The issue left open is therefore structural rather than parametric, namely why the global--local--global pattern is the optimal one for quantum partial search.

\subsection{\label{subsec:three}Three-dimensional formulation of the partial-search problem}

The partial-search problem admits a natural three-dimensional formulation \cite{KorepinVallilo2006GroupTheoreticalPartialSearch}. We introduce the orthonormal basis
\begin{align}
&\ket{t}=\ket{t_1}\otimes \ket{t_2}, \label{eq:basis-t-sec2}\\
&\ket{ntt}
=
\frac{1}{\sqrt{b-1}}
\sum_{j\neq t_2}\ket{t_1}\otimes \ket{j}, \label{eq:basis-btbar-sec2}\\
&\ket{u}
=
\frac{1}{\sqrt{N-b}}
\Bigl(
\sqrt N\,\ket{s_n}-\ket{t}-\sqrt{b-1}\,\ket{ntt}
\Bigr). \label{eq:basis-bbar-sec2}
\end{align}
Here $\ket{ntt}$ is the normalized equal superposition of all non-target states inside the target block, while $\ket{u}$ is the normalized equal superposition of all states in the non-target blocks.

The relevant invariant subspace is therefore $\mathcal H_{\mathrm{red}}=\mathrm{span}\{\ket{t},\ket{ntt},\ket{u}\}$. With respect to the ordered basis $\{\ket{t},\ket{ntt},\ket{u}\}$, the initial state $\ket{s_n}$ takes the form
\begin{equation}\label{eq:sn-reduced-sec2}
\ket{s_n}
=
\sin\gamma\,\sin\theta_2\,\ket{t}
+\sin\gamma\,\cos\theta_2\,\ket{ntt}
+\cos\gamma\,\ket{u},
\end{equation}
where
\begin{equation}\label{eq:angles-sec2}
\sin\theta_1=\frac{1}{\sqrt N},
\qquad
\sin\theta_2=\frac{1}{\sqrt b},
\qquad
\sin\gamma=\frac{1}{\sqrt K}.
\end{equation}
The terminal condition for partial search is simply that the amplitude on $\ket{u}$ vanish. Thus the terminal set is the plane
\begin{equation}\label{eq:Sigma-sec2}
\Sigma
=
\left\{
\ket{\psi}\in\mathcal H_{\mathrm{red}}:\ \langle u|\psi\rangle=0
\right\}.
\end{equation}

With respect to the ordered orthonormal basis $\{\ket{t},\ket{ntt},\ket{u}\}$, the global and local Grover operators are represented by the following $3\times 3$ matrices \cite{KorepinVallilo2006GroupTheoreticalPartialSearch}:
\begin{equation}\label{eq:Gn-reduced-sec2}
G_n=
\begin{pmatrix}
1-2\sin^2\gamma\,\sin^2\theta_2
&
2\sin^2\gamma\,\sin\theta_2\cos\theta_2
&
2\sin\gamma\cos\gamma\,\sin\theta_2
\\[1mm]
-2\sin^2\gamma\,\sin\theta_2\cos\theta_2
&
2\sin^2\gamma\,\cos^2\theta_2-1
&
2\sin\gamma\cos\gamma\,\cos\theta_2
\\[1mm]
-2\sin\gamma\cos\gamma\,\sin\theta_2
&
2\sin\gamma\cos\gamma\,\cos\theta_2
&
2\cos^2\gamma-1
\end{pmatrix},
\end{equation}
and
\begin{equation}\label{eq:Gm-reduced-sec2}
G_m=
\begin{pmatrix}
\cos(2\theta_2) & \sin(2\theta_2) & 0\\
-\sin(2\theta_2) & \cos(2\theta_2) & 0\\
0 & 0 & 1
\end{pmatrix}.
\end{equation}
Hence the reduced partial-search dynamics is realized inside $O(3)$. The noncommutativity of $G_n$ and $G_m$ is precisely what makes the ordering problem nontrivial. A schematic representation of this reduced three-dimensional GRK trajectory is shown in Fig.~\ref{fig:grk-trajectory}.

For the asymptotic analysis developed later, we pass to the large-block continuous limit. Since $\theta_1\sim 1/\sqrt N$ and $\theta_2\sim 1/\sqrt b$, both elementary Grover operators become near-identity rotations, and one may regard them as exponentials of infinitesimal generators. Writing $s=\sin\gamma$ and $c=\cos\gamma$, the corresponding generators for $G_n$ and $G_m$ are
\begin{equation}\label{eq:XY-generators-sec2}
X=
\begin{pmatrix}
0 & s^2 & sc\\
-s^2 & 0 & 0\\
-sc & 0 & 0
\end{pmatrix},
\qquad
Y=
\begin{pmatrix}
0 & 1 & 0\\
-1 & 0 & 0\\
0 & 0 & 0
\end{pmatrix}.
\end{equation}
These generators are skew-symmetric, namely $X^\top=-X$ and $Y^\top=-Y$, so $X,Y\in \mathfrak{so}(3)$. The partial-search problem is therefore recast as a time-optimal switching problem between two fixed generators on the reduced state space. 


Note that $G_n$ has determinant $-1$. Thus the natural formulation lives in $O(3)$ rather than in $SO(3)$. However, for the asymptotic viewpoint only the connected, infinitesimal part of the motion is relevant. Indeed, in the large-block limit one has $\theta_1,\theta_2\to 0$, so each elementary Grover step is a small orthogonal transformation close to the identity after removal of an inessential overall sign convention. Consequently, the local behavior of the dynamics is governed by tangent vectors at the identity, namely by the Lie algebra $\mathfrak{so}(3)$ of skew-symmetric generators. For this reason the relevant continuous limit is formulated in terms of the generators $X,Y\in\mathfrak{so}(3)$.

\begin{figure}[t]
\centering
\includegraphics[width=0.6\linewidth]{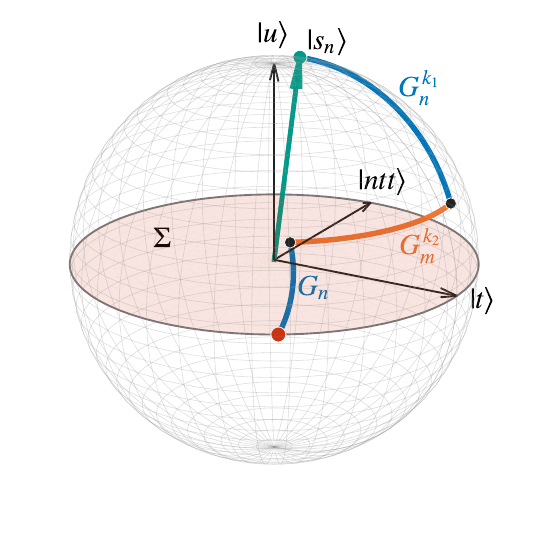}
\caption{Schematic trajectory of the GRK operator in the reduced basis \(\{\ket{t},\ket{ntt},\ket{u}\}\). The blue arcs represent the global Grover stages \(G_n^{k_1}\) and the final \(G_n\), while the orange arc represents the local Grover stage \(G_m^{k_2}\). The shaded disk denotes the terminal plane \(\Sigma=\{\ket{\psi}:\langle u|\psi\rangle=0\}\), and the green arrow indicates the initial state \(\ket{s_n}\).}
\label{fig:grk-trajectory}
\end{figure}

\section{\label{sec:PMP}Time-optimal control problem and Pontryagin maximum principle}

In Sec. \ref{subsec:three}, the partial-search problem was reduced to the three-dimensional $\mathcal H_{\mathrm{red}}=\mathrm{span}\{|t\rangle,|ntt\rangle,|u\rangle\}$ real space with terminal condition \(\langle u|\psi\rangle=0\). In this section, we first reformulate the
search for an optimal sequence of global and local Grover stages as a time-optimal control
problem on \(\mathcal H_{\mathrm{red}}\) in Sec. \ref{subsec:time_optimal}. We then introduce
the PMP in Sec. \ref{subsec:PMP} as the basic tool for
analyzing the optimal switching structure. 

\subsection{\label{subsec:time_optimal}Time-optimal control formulation of partial search}

Our goal is not merely to optimize the iteration numbers within a prescribed ordering, but to
determine the optimal ordering of global and local search stages among all admissible alternating
sequences. In the large-block asymptotic regime, this discrete optimization problem admits a
natural continuous-limit formulation.

Recall that the global and local Grover operators \(G_n\) and \(G_m\) are associated with the
skew-symmetric generators \(X\) and \(Y\) defined in Eq.
\eqref{eq:XY-generators-sec2}. Since $X,Y\in\mathfrak{so}(3)$, both generate norm-preserving rotations
on \(\mathcal H_{\mathrm{red}}\). The partial-search task is to steer the initial state
\(|s_n\rangle\) to the terminal plane $\Sigma$ defined in Eq. \eqref{eq:Sigma-sec2}, with minimum total query cost. Geometrically, this is a minimum-time hitting problem for the plane \(\Sigma\). In the continuous asymptotic description, the problem becomes the time-optimal control system
\begin{equation}
\dot\psi(t)=A(t)\psi(t),\qquad A(t)\in\{X,Y\},
\label{eq:control-system-A}
\end{equation}
with boundary conditions
\begin{equation}
\psi(0)=|s_n\rangle,\qquad \psi(T)\in\Sigma,
\label{eq:control-boundary-PMP}
\end{equation}
and objective $\min T$. 

Equivalently, one may introduce binary controls \(u_X(t),u_Y(t)\in\{0,1\}\) satisfying
\(u_X(t)+u_Y(t)=1\), and write
\begin{equation}
\dot\psi(t)=u_X(t)X\psi(t)+u_Y(t)Y\psi(t).
\label{eq:binary-control-system}
\end{equation}
We restrict attention to measurable piecewise-constant controls, so that admissible trajectories
are concatenations of \(X\)- and \(Y\)-arcs.

A piecewise-constant control corresponds precisely to a concatenation of global and local search
operators. The elapsed time spent on an \(X\)-arc or a \(Y\)-arc is the continuous-limit counterpart
of the number of global or local Grover queries, respectively. Thus minimizing the total oracle
count is, in the asymptotic regime \(N\to\infty\) and \(b\to\infty\), equivalent to minimizing
the total control time.

Since the control set consists of two isolated points, the PMP leads naturally to a bang--bang
description in terms of concatenations of \(X\)- and \(Y\)-arcs, except possibly for singular
regimes to be excluded later. The main structural question is therefore the following:
\emph{among all bang--bang trajectories steering \(|s_n\rangle\) to \(\Sigma\), which switching
patterns can be time-optimal?} To address this question, we now invoke the PMP.

\subsection{\label{subsec:PMP}Pontryagin maximum principle and the switching function}

The PMP gives a first-order necessary condition for optimality in
time-minimization problems \cite{Pontryagin1962,Liberzon2012,AgrachevSachkov2004}. For the
present two-generator problem, it reduces the choice between the two admissible motions
\(X\psi\) and \(Y\psi\) to the sign of a single scalar quantity, the switching function.

Let \(p(t)\in(\mathcal H_{\mathrm{red}})^*\cong\mathbb R^3\) be the costate. For a time-optimal
problem, the normal PMP function is
\begin{equation}
H(p,\psi;A)=\langle p,A\psi\rangle-1,\qquad A\in\{X,Y\}.
\label{eq:normal-Hamiltonian}
\end{equation}
The constant term \(-1\) reflects the fact that time is spent at unit rate. We restrict attention
to normal extremals, that is, to the nondegenerate PMP branch in which the running time enters
the PMP function with a nonzero multiplier. These are the extremals relevant to the bang--bang
analysis developed below. Here and below, normal refers to this nonzero-multiplier PMP branch, whereas regular refers to the switching structure: a regular normal extremal has transversal switching points, \(\Phi=0\) and \(\dot\Phi=\phi_2\neq0\), and contains no interval on which \(\Phi\equiv0\).

According to the PMP, along a normal time-optimal trajectory there exists a nonzero costate
\(p(t)\) such that
\begin{equation}
\dot\psi(t)=A(t)\psi(t),\qquad
\dot p(t)=-A(t)^{\top}p(t),
\label{eq:state-costate-system}
\end{equation}
and, for almost every \(t\), the chosen control \(A(t)\in\{X,Y\}\) maximizes the PMP function:
\begin{equation}
H(p(t),\psi(t);A(t))
=
\max_{B\in\{X,Y\}} H(p(t),\psi(t);B).
\label{eq:H-max-condition}
\end{equation}

Because only two controls are available, the relevant comparison is simply the difference
between the two values of this PMP function. This leads to the following definition.

\begin{definition}
The switching function is
\begin{equation}
\Phi(t):=\langle p(t),(X-Y)\psi(t)\rangle.
\label{eq:switching-function}
\end{equation}
\end{definition}
\noindent The sign of \(\Phi\) determines which of the two admissible generators is instantaneously
preferred from the viewpoint of time-optimality.
\begin{lemma}\label{lem:PMP-sign-rule}
Let \((\psi(\cdot),p(\cdot),A(\cdot))\) be a regular normal extremal for the time-optimal
problem \eqref{eq:control-system-A}--\eqref{eq:control-boundary-PMP}. Then:
\begin{enumerate}
\item if \(\Phi(t)>0\), the maximizing control is \(A(t)=X\);
\item if \(\Phi(t)<0\), the maximizing control is \(A(t)=Y\);
\item a switching between \(X\) and \(Y\) can occur only at times when \(\Phi(t)=0\).
\end{enumerate}
\end{lemma}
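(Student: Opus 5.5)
The plan is to read the statement off directly from the maximization condition \eqref{eq:H-max-condition}. Only two controls are admissible, so maximizing $H(p(t),\psi(t);B)$ over $B\in\{X,Y\}$ amounts to comparing two real numbers. The constant $-1$ in \eqref{eq:normal-Hamiltonian} is the same for both controls, so it cancels from the comparison. The costate pairing is linear in the generator. Hence
\begin{equation}
H(p,\psi;X)-H(p,\psi;Y)=\langle p,X\psi\rangle-\langle p,Y\psi\rangle=\langle p,(X-Y)\psi\rangle=\Phi(t).
\end{equation}

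Items (i) and (ii) then follow pointwise. If $\Phi(t)>0$, then $X$ gives a strictly larger value of $H$, so $X$ is the unique maximizer and \eqref{eq:H-max-condition} forces $A(t)=X$. The case $\Phi(t)<0$ is symmetric and forces $A(t)=Y$. Strictly, \eqref{eq:H-max-condition} holds only for almost every $t$. To pass from "almost every" to "every time with $\Phi(t)\neq 0$," I will use continuity of $\Phi$. Along a piecewise-constant control, both $\psi$ and $p$ solve linear ODEs with piecewise-constant coefficients, namely \eqref{eq:state-costate-system}, so both are absolutely continuous (indeed continuous and piecewise analytic). Therefore $\Phi$ is continuous. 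If $\Phi(t_0)>0$, then $\Phi>0$ on a neighborhood of $t_0$, and $A=X$ almost everywhere there. Since the control is piecewise constant, we may take $A=X$ on that whole neighborhood, up to the measure-zero convention at switching instants.

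For item (iii), suppose a switch occurs at $t_*$ but $\Phi(t_*)\neq0$, say $\Phi(t_*)>0$. By continuity, $\Phi>0$ on some interval $(t_*-\varepsilon,t_*+\varepsilon)$. By (i), $A=X$ almost everywhere on this interval, which contradicts the existence of a $Y$-arc on one side of $t_*$. The case $\Phi(t_*)<0$ is identical. Hence every switching time satisfies $\Phi(t_*)=0$.

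The main subtlety is not algebraic. It lies in establishing the continuity of $\Phi$ across switching times, so that the almost-everywhere maximization condition can be turned into a pointwise sign rule. This is handled by the continuity of solutions of \eqref{eq:state-costate-system}. Regularity in the sense defined above, namely transversal zeros of $\Phi$ and no interval with $\Phi\equiv0$, is not needed for (i)--(iii). It only guarantees that the zero set of $\Phi$ is discrete, so that the rule determines the control everywhere except at isolated switching points.
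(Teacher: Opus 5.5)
Your proposal is correct and follows the same route as the paper: you compute $H(p,\psi;X)-H(p,\psi;Y)=\langle p,(X-Y)\psi\rangle=\Phi$ and read (i)--(iii) off the maximization condition. Your continuity argument for $\Phi$, via absolute continuity of $\psi$ and $p$, is a useful added step. The paper leaves it implicit when passing from the almost-everywhere maximization condition to the pointwise claim that switches occur only at zeros of $\Phi$.
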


\begin{proof}
From Eq. \eqref{eq:normal-Hamiltonian}, we have
\begin{equation}
H(p,\psi;X)-H(p,\psi;Y)
=
\bigl(\langle p,X\psi\rangle-1\bigr)-\bigl(\langle p,Y\psi\rangle-1\bigr)
=
\langle p,(X-Y)\psi\rangle
=
\Phi.
\end{equation}
Hence $\Phi>0$ corresponds to $H(p,\psi;X)>H(p,\psi;Y)$ ($\Phi<0$ corresponds to $H(p,\psi;Y)>H(p,\psi;X)$). By the maximization condition \eqref{eq:H-max-condition}, the maximizing control is \(X\)
when \(\Phi>0\) and \(Y\) when \(\Phi<0\). A switching can therefore occur only at times when
the two PMP function values coincide, equivalently when \(\Phi=0\).
\end{proof}

Lemma \ref{lem:PMP-sign-rule} is the key structural consequence of the PMP for the present
problem. Once the switching function is known, the control is known. Hence the entire
switching pattern of a regular normal extremal is encoded in the zero set of \(\Phi\). The remainder of the analysis is therefore devoted to the evolution of \(\Phi\) along \(X\)- and
\(Y\)-arcs. Since the control is recovered from the sign of \(\Phi\), the zero structure of the
switching function determines the admissible optimal switching patterns.

It is worth emphasizing that the explicit form of the costate \(p(t)\) is not needed for the structural analysis. Indeed, the PMP is used here only through the scalar quantities \(\langle p(t),A\psi(t)\rangle\) with \(A\in\{X,Y\}\), rather than through a closed-form solution for \(p(t)\) itself. The role of the costate is therefore auxiliary. It encodes the first-order sensitivity of the optimal hitting time, but all information relevant to the switching structure is contained in the induced switching function and its reduced descendants. What matters is only that a nonzero costate exists along a normal extremal and satisfies the adjoint equation, since this is sufficient to derive the closed reduced dynamics governing the switching pattern.

\section{\label{sec:switching_dynamics}Reduced switching dynamics}

In this section, we derive a closed reduced system governing the switching function and integrate it explicitly on \(X\)- and \(Y\)-arcs. We first introduce the reduced switching variables and prove the Lie closure of the corresponding observables in Sec. \ref{subsec:reduced_switching_variables}. We then obtain the reduced dynamics on \(X\)- and \(Y\)-arcs in Sec. \ref{subsec:reduced_dynamics}. Finally, in Sec. \ref{subsec:reduced_evolution}, we derive the explicit propagation law between switching points.

\subsection{\label{subsec:reduced_switching_variables}Reduced switching variables and Lie closure}

We now introduce the reduced variables that encode the switching function and its first descendants under the \(X\)- and \(Y\)-flows. The point of this construction is that the switching analysis closes on a finite-dimensional space of observables, hence the term ``reduced'', and no explicit formula for the costate is required.

Recall that along a normal extremal the state and costate satisfy Eq. \eqref{eq:state-costate-system} and the switching function is given by Eq. \eqref{eq:switching-function}. It is therefore natural to consider matrix observables of the form \(\langle p,F\psi\rangle\).

\begin{lemma}\label{lem:reduced-observable-transport}
Let \(F\) be any fixed \(3\times 3\) real matrix. Along a pure \(A\)-arc, with \(A\in\{X,Y\}\) constant, one has
\begin{equation}
\frac{d}{dt}\langle p,F\psi\rangle=\langle p,[F,A]\psi\rangle,
\end{equation}
where \([F,A]=FA-AF\).
\end{lemma}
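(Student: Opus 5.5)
The argument is a direct differentiation of the bilinear pairing along a single arc, using the state--costate system \eqref{eq:state-costate-system} with \(A\) held constant. On a pure \(A\)-arc the state obeys \(\dot\psi=A\psi\) and the costate obeys \(\dot p=-A^{\top}p\). Because \(F\) is fixed and \(\langle\cdot,\cdot\rangle\) is the standard Euclidean pairing on \(\mathbb R^3\), the product rule applies:
\begin{equation}
\frac{d}{dt}\langle p,F\psi\rangle=\langle \dot p,F\psi\rangle+\langle p,F\dot\psi\rangle .
\end{equation}

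The next step is to substitute the two equations of motion. The second term becomes \(\langle p,FA\psi\rangle\) directly. For the first term we have \(\langle -A^{\top}p,F\psi\rangle\). Moving \(A^{\top}\) across the pairing gives \(\langle -A^{\top}p,F\psi\rangle=-\langle p,AF\psi\rangle\), since \(\langle A^{\top}p,v\rangle=\langle p,Av\rangle\). Adding the two contributions yields
\begin{equation}
\frac{d}{dt}\langle p,F\psi\rangle=\langle p,(FA-AF)\psi\rangle=\langle p,[F,A]\psi\rangle,
\end{equation}
which is the claim.

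Two remarks keep the argument honest. First, we never use the skew-symmetry of \(X\) or \(Y\). The costate equation already carries \(-A^{\top}\), so the identity holds for any constant generator. The fact that \(X,Y\in\mathfrak{so}(3)\) only lets one rewrite \(-A^{\top}=A\) if desired. Second, the restriction to a pure arc matters only because it makes \(A\) constant, so that \(\psi\) and \(p\) are smooth there (explicitly \(\psi(t)=e^{tA}\psi(t_0)\) and \(p(t)=e^{-tA^{\top}}p(t_0)\)) and the derivative exists classically. Across a switching point one would apply the lemma separately on each side.

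There is no real obstacle here. The only step requiring care is the sign bookkeeping in the adjoint term, which fixes the order \([F,A]\) rather than \([A,F]\).
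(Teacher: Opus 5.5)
Your proof is correct and is essentially the same as the paper's: apply the product rule, substitute $\dot\psi=A\psi$ and $\dot p=-A^{\top}p$, and move $A^{\top}$ across the Euclidean pairing to obtain $\langle p,(FA-AF)\psi\rangle$. One cosmetic slip: on an arc starting at $t_0$ the explicit solutions should read $e^{(t-t_0)A}\psi(t_0)$ and $e^{-(t-t_0)A^{\top}}p(t_0)$.
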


\begin{proof}
Using \(\dot\psi=A\psi\) and \(\dot p=-A^\top p\), we have
\begin{equation}
\frac{d}{dt}\langle p,F\psi\rangle
=\langle \dot p,F\psi\rangle+\langle p,F\dot\psi\rangle
=\langle -A^\top p,F\psi\rangle+\langle p,FA\psi\rangle.
\end{equation}
Since \(\langle -A^\top p,F\psi\rangle=-\langle p,AF\psi\rangle\), it follows that
\begin{equation}
\frac{d}{dt}\langle p,F\psi\rangle
=\langle p,(FA-AF)\psi\rangle
=\langle p,[F,A]\psi\rangle,
\end{equation}
as claimed.
\end{proof}

The above lemma shows that the evolution of any observable \(\langle p,F\psi\rangle\) is governed by commutators with the active generator. We now choose a basis adapted to the switching function and to the dynamics generated by \(X\) and \(Y\) in Eq. \eqref{eq:XY-generators-sec2}.

\begin{definition}\label{def:reduced-switching-vars}
Define
\begin{equation}
F_1:=X-Y,\qquad F_2:=[X,Y],\qquad F_3:=[Y,[X,Y]].
\end{equation}
The associated reduced switching variables are
\begin{equation}
\phi_1:=\langle p,F_1\psi\rangle,\qquad
\phi_2:=\langle p,F_2\psi\rangle,\qquad
\phi_3:=\langle p,F_3\psi\rangle.
\end{equation}
In particular, \(\phi_1=\Phi\).
\end{definition}

The apparently asymmetric definition of \(F_3\) is dictated by the closure of the reduced switching dynamics: \([X,[X,Y]]=\sin^2\gamma\,(X-Y)\) is already proportional to \(F_1\), whereas differentiating \(F_2=[X,Y]\) along a \(Y\)-arc produces the only new independent direction \(F_3=[Y,[X,Y]]\), and another choice such as \([(X+Y),[X,Y]]\) would merely mix this direction with \(F_1\). The relevance of this choice is that the span of \(\{F_1,F_2,F_3\}\) is invariant under commutation with \(X\) and \(Y\).

\begin{lemma}\label{lem:XY-Lie-closure}
The matrices \(F_1,F_2,F_3\) satisfy the exact identities
\begin{align}
[F_1,X]&=F_2, \label{eq:F1X}\\
[F_1,Y]&=F_2, \label{eq:F1Y}\\
[F_2,X]&=-\sin^2\gamma\,F_1, \label{eq:F2X}\\
[F_2,Y]&=-F_3, \label{eq:F2Y}\\
[F_3,X]&=\sin^2\gamma\,F_2, \label{eq:F3X}\\
[F_3,Y]&=\phantom{-}F_2. \label{eq:F3Y}
\end{align}
Consequently, the space \(\operatorname{span}\{F_1,F_2,F_3\}\) is invariant under the adjoint actions \(\operatorname{ad}_X\) and \(\operatorname{ad}_Y\).
\end{lemma}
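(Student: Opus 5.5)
The plan is to reduce all six identities to two structural facts about the generators in Eq.~\eqref{eq:XY-generators-sec2}:
\begin{equation}
[X,[X,Y]]=\sin^2\gamma\,(X-Y),\qquad F_3=[Y,[X,Y]]=X-\sin^2\gamma\,Y .
\end{equation}
The remaining identities then follow from bilinearity and antisymmetry of the commutator alone. The second fact says that \(F_3\) lies in \(\operatorname{span}\{X,Y\}\). The conceptual reason is the isomorphism \(\mathfrak{so}(3)\cong(\mathbb R^3,\times)\), under which each generator corresponds to a vector. Writing \(s=\sin\gamma\), \(c=\cos\gamma\), the vector of \(X\) has squared norm \(s^4+s^2c^2=s^2\), the vector of \(Y\) is a unit vector, and their inner product is \(s^2\) (up to a sign fixed by the chosen identification). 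The double cross product identity \(a\times(a\times b)=a(a\cdot b)-b|a|^2\) then yields both facts, up to overall signs.

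Since sign conventions under the hat map are easy to get wrong, I would not rely on that identification for signs. Instead I would verify the two facts by direct \(3\times3\) multiplication. First compute \(XY\), \(YX\), and hence \(F_2=[X,Y]\), which is a skew matrix whose only nonzero entries involve \(sc\) in the \((1,3)\)/\((2,3)\) positions. Then form \([X,F_2]\) and \([Y,F_2]\), and simplify using \(s^2+c^2=1\) to compare entrywise with \(s^2(X-Y)\) and with \(X-s^2Y\). I expect this bookkeeping of signs to be the only real obstacle. It is routine but must be done carefully, since every identity in the lemma inherits these signs.

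With the two facts in hand, each identity is a one-line deduction:
\begin{itemize}
\item Eq.~\eqref{eq:F1X}: \([X-Y,X]=-[Y,X]=[X,Y]=F_2\).
\item Eq.~\eqref{eq:F1Y}: \([X-Y,Y]=[X,Y]=F_2\).
\item Eq.~\eqref{eq:F2X}: \([F_2,X]=-[X,[X,Y]]=-s^2F_1\).
\item Eq.~\eqref{eq:F2Y}: \([F_2,Y]=-[Y,F_2]=-F_3\), directly from the definition of \(F_3\).
\item Eq.~\eqref{eq:F3X}: using \(F_3=X-s^2Y\), we get \([F_3,X]=-s^2[Y,X]=s^2F_2\).
\item Eq.~\eqref{eq:F3Y}: similarly, \([F_3,Y]=[X,Y]=F_2\).
\end{itemize}

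Finally, the invariance statement is immediate. Each right-hand side lies in \(\operatorname{span}\{F_1,F_2,F_3\}\), and \(\operatorname{ad}_X\) and \(\operatorname{ad}_Y\) are linear, so the span is invariant under both adjoint actions. Note that, although \(F_3\in\operatorname{span}\{X,Y\}\), it is independent of \(F_1=X-Y\) whenever \(s^2\neq1\), which is the case for \(K>1\).
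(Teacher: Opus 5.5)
Your proposal is correct and is essentially the paper's argument: the paper also obtains all six identities by bilinearity and antisymmetry from base relations, namely $[X,[X,Y]]=\sin^2\gamma\,(X-Y)$, $[Y,[X,Y]]=F_3$, $[X,F_3]=-\sin^2\gamma\,[X,Y]$ and $[Y,F_3]=-[X,Y]$, which it states without explicit matrix computation. Your observation $F_3=X-\sin^2\gamma\,Y$ is correct and makes the last two of these automatic, leaving only two facts to verify by multiplication. One small bookkeeping correction: $F_2=[X,Y]$ has nonzero entries $\pm\sin\gamma\cos\gamma$ only at positions $(2,3)$ and $(3,2)$, while it is $F_3$ that occupies the $(1,3)$ and $(3,1)$ positions.
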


\begin{proof}
The identities are immediate from the definitions of \(F_1,F_2,F_3\) together with the commutator relations
\begin{equation}
[X,[X,Y]]=\sin^2\gamma\,(X-Y),\qquad
[Y,[X,Y]]=F_3,
\end{equation}
and
\begin{equation}
[X,F_3]=-\sin^2\gamma\,[X,Y],\qquad
[Y,F_3]=- [X,Y].
\end{equation}
It is straightforward to verify that
\begin{equation}
[F_1,X]=[X-Y,X]=-[Y,X]=[X,Y]=F_2,
\end{equation}
and similarly
\begin{equation}
[F_1,Y]=[X-Y,Y]=[X,Y]=F_2.
\end{equation}
Next, we have
\begin{equation}
[F_2,X]=[[X,Y],X]=-[X,[X,Y]]
=-\sin^2\gamma\,(X-Y)=-\sin^2\gamma\,F_1,
\end{equation}
while by definition of \(F_3\),
\begin{equation}
[F_2,Y]=[[X,Y],Y]=-[Y,[X,Y]]=-F_3.
\end{equation}
Finally,
\begin{equation}
[F_3,X]=-[X,F_3]=\sin^2\gamma\,[X,Y]=\sin^2\gamma\,F_2,
\qquad
[F_3,Y]=-[Y,F_3]=[X,Y]=F_2.
\end{equation}
Thus commutation with either \(X\) or \(Y\) maps \(\operatorname{span}\{F_1,F_2,F_3\}\) into itself.
\end{proof}

Lemma \ref{lem:reduced-observable-transport} together with Lemma \ref{lem:XY-Lie-closure} provides the basic simplification for the switching analysis. Once one tracks the triple \((\phi_1,\phi_2,\phi_3)\), no further commutator variables are generated: the switching function and its descendants evolve inside a closed three-dimensional linear system. In other words, the PMP dynamics induces a finite-dimensional reduced switching dynamics, and this closed subsystem is the only part of the state-costate pair needed in the subsequent analysis of switching points, singular arcs, and compression.

\subsection{\label{subsec:reduced_dynamics}Reduced dynamics on \(X\)- and \(Y\)-arcs}

We now use the closed commutator structure from Lemma~\ref{lem:reduced-observable-transport} and Lemma~\ref{lem:XY-Lie-closure} to write the reduced dynamics explicitly on each bang arc.

\begin{lemma}\label{lem:reduced-X-arc}
Along a pure \(X\)-arc, the reduced variables satisfy
\begin{equation}
\dot\phi_1=\phi_2,\qquad
\dot\phi_2=-\sin^2\gamma\,\phi_1,\qquad
\dot\phi_3=\sin^2\gamma\,\phi_2.
\label{eq:reduced-X-system}
\end{equation}
Equivalently, \(\phi_1\) solves the harmonic equation
\begin{equation}
\ddot\phi_1+\sin^2\gamma\,\phi_1=0.
\label{eq:phi1-X-harmonic}
\end{equation}
\end{lemma}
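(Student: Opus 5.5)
The plan is to apply Lemma~\ref{lem:reduced-observable-transport} with \(A=X\) held constant, taking in turn \(F=F_1,F_2,F_3\). Each time derivative is then read off from the commutator identities of Lemma~\ref{lem:XY-Lie-closure}.

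Concretely, on a pure \(X\)-arc Lemma~\ref{lem:reduced-observable-transport} gives
\[
\dot\phi_i=\langle p,[F_i,X]\psi\rangle .
\]
We then substitute the three identities:
\begin{itemize}
\item Eq.~\eqref{eq:F1X} gives \([F_1,X]=F_2\), hence \(\dot\phi_1=\phi_2\).
\item Eq.~\eqref{eq:F2X} gives \([F_2,X]=-\sin^2\gamma\,F_1\), hence \(\dot\phi_2=-\sin^2\gamma\,\phi_1\).
\item Eq.~\eqref{eq:F3X} gives \([F_3,X]=\sin^2\gamma\,F_2\), hence \(\dot\phi_3=\sin^2\gamma\,\phi_2\).
\end{itemize}
In each case we use that \(\langle p,\cdot\,\psi\rangle\) is linear in the matrix argument, so the scalar \(\sin^2\gamma\) comes out. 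This yields system \eqref{eq:reduced-X-system}.

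The harmonic equation \eqref{eq:phi1-X-harmonic} follows by differentiating the first equation once more. Along the arc, \(\phi_1\) and \(\phi_2\) are smooth, since \(\psi\) and \(p\) solve linear ODEs with a constant coefficient matrix. So \(\ddot\phi_1=\dot\phi_2=-\sin^2\gamma\,\phi_1\).

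The only substantive point is the identity \([X,[X,Y]]=\sin^2\gamma\,(X-Y)\), together with \([X,F_3]=-\sin^2\gamma\,[X,Y]\). These were used in Lemma~\ref{lem:XY-Lie-closure} and are taken as given. As a sanity check, I would verify the first identity directly from Eq.~\eqref{eq:XY-generators-sec2}:
\begin{itemize}
\item Compute \([X,Y]\). It is the skew matrix whose only nonzero entries are in the \((2,3)\) and \((3,2)\) positions, equal to \(\mp sc\) up to sign.
\item Commute this with \(X\) again and use \(s^2+c^2=1\) to collect terms into \(s^2(X-Y)\).
\end{itemize}
The second identity follows in the same way, or from the Jacobi identity.

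I expect no real obstacle. The main care needed is with sign conventions: Lemma~\ref{lem:reduced-observable-transport} uses \([F,A]\), not \([A,F]\), and this must match the ordering in Lemma~\ref{lem:XY-Lie-closure}.
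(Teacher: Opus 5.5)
Your proposal is correct and follows the paper's proof step for step: apply Lemma~\ref{lem:reduced-observable-transport} with $A=X$ to $F_1,F_2,F_3$, substitute Eqs.~\eqref{eq:F1X}, \eqref{eq:F2X} and \eqref{eq:F3X}, then differentiate $\dot\phi_1=\phi_2$ once more. Your optional sanity check is also sound: $[X,Y]$ has only the entries $\pm sc$ in positions $(2,3)$ and $(3,2)$, $[X,[X,Y]]=s^2(X-Y)$ holds, and the Jacobi identity with $Z=[X,Y]$ gives $[X,[Y,Z]]=[Y,[X,Z]]=s^2[Y,X]$, which is exactly $[F_3,X]=s^2F_2$.
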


\begin{proof}
On an \(X\)-arc, Lemma~\ref{lem:reduced-observable-transport} gives
\begin{equation}
\dot\phi_j=\langle p,[F_j,X]\psi\rangle,\qquad j=1,2,3.
\end{equation}
Using Lemma~\ref{lem:XY-Lie-closure}, we obtain
\begin{equation}
[F_1,X]=F_2,\qquad [F_2,X]=-\sin^2\gamma\,F_1,\qquad [F_3,X]=\sin^2\gamma\,F_2.
\end{equation}
Therefore
\begin{equation}
\dot\phi_1=\phi_2,\qquad
\dot\phi_2=-\sin^2\gamma\,\phi_1,\qquad
\dot\phi_3=\sin^2\gamma\,\phi_2.
\end{equation}
Differentiating the first identity and substituting the second yields
\begin{equation}
\ddot\phi_1=\dot\phi_2=-\sin^2\gamma\,\phi_1,
\end{equation}
which is Eq. \eqref{eq:phi1-X-harmonic}.
\end{proof}

Thus, on an \(X\)-arc, the switching function \(\phi_1=\Phi\) oscillates harmonically with angular frequency \(\sin\gamma\). In particular, its zeros are isolated unless \(\phi_1\equiv 0\), a possibility to be excluded in Sec. \ref{subsec:singular_arcs}. The corresponding reduced system on a \(Y\)-arc is as follows.

\begin{lemma}\label{lem:reduced-Y-arc}
Along a pure \(Y\)-arc, the reduced variables satisfy
\begin{equation}
\dot\phi_1=\phi_2,\qquad
\dot\phi_2=-\phi_3,\qquad
\dot\phi_3=\phi_2.
\label{eq:reduced-Y-system}
\end{equation}
Equivalently, \(\phi_2\) and \(\phi_3\) satisfy \(\ddot\phi_2=-\phi_2\) and \(\ddot\phi_3=-\phi_3\), while \(\phi_1\) is obtained by one further integration of \(\dot\phi_1=\phi_2\).
\end{lemma}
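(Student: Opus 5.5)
The proof mirrors that of Lemma~\ref{lem:reduced-X-arc}, with the generator \(X\) replaced by \(Y\). On a pure \(Y\)-arc, Lemma~\ref{lem:reduced-observable-transport} with \(A=Y\) gives
\begin{equation}
\dot\phi_j=\langle p,[F_j,Y]\psi\rangle,\qquad j=1,2,3.
\end{equation}
The task is therefore to read off \([F_j,Y]\) from Lemma~\ref{lem:XY-Lie-closure} and re-express the result through the reduced variables.

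By Eqs.~\eqref{eq:F1Y}, \eqref{eq:F2Y} and \eqref{eq:F3Y} we have \([F_1,Y]=F_2\), \([F_2,Y]=-F_3\) and \([F_3,Y]=F_2\). Substituting these into the transport formula and using the linearity of \(F\mapsto\langle p,F\psi\rangle\), together with Definition~\ref{def:reduced-switching-vars}, yields
\begin{equation}
\dot\phi_1=\phi_2,\qquad \dot\phi_2=-\phi_3,\qquad \dot\phi_3=\phi_2,
\end{equation}
which is Eq.~\eqref{eq:reduced-Y-system}.

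For the second-order form, differentiate the middle equation once and substitute the third. This step is where the relative signs in \eqref{eq:F2Y} and \eqref{eq:F3Y} are what produce a harmonic rather than hyperbolic equation, so I would double-check those two identities against the explicit matrices in Eq.~\eqref{eq:XY-generators-sec2}. With those signs,
\begin{equation}
\ddot\phi_2=-\dot\phi_3=-\phi_2.
\end{equation}
Similarly, differentiating the third equation and substituting the second gives
\begin{equation}
\ddot\phi_3=\dot\phi_2=-\phi_3.
\end{equation}
Finally, \(\phi_1\) is not coupled back into the \((\phi_2,\phi_3)\) subsystem. It is recovered by integrating \(\dot\phi_1=\phi_2\), which completes the statement.
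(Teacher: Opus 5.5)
Your proposal is correct and is essentially the paper's proof: apply Lemma~\ref{lem:reduced-observable-transport} with \(A=Y\), substitute \([F_1,Y]=F_2\), \([F_2,Y]=-F_3\), \([F_3,Y]=F_2\) from Lemma~\ref{lem:XY-Lie-closure}, and differentiate once more to obtain \(\ddot\phi_2=-\phi_2\) and \(\ddot\phi_3=-\phi_3\). The sign check you flag does go through against the explicit matrices. Identifying \(\mathfrak{so}(3)\) with \(\mathbb{R}^3\) under the cross product, \([X,Y]\) and \([Y,[X,Y]]\) correspond to the vectors \((-sc,0,0)\) and \((0,sc,0)\). This gives \([Y,F_3]=-F_2\), so the equation is harmonic rather than hyperbolic.
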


\begin{proof}
On a \(Y\)-arc, Lemma~\ref{lem:reduced-observable-transport} gives
\begin{equation}
\dot\phi_j=\langle p,[F_j,Y]\psi\rangle,\qquad j=1,2,3.
\end{equation}
Using Lemma~\ref{lem:XY-Lie-closure}, we have
\begin{equation}
[F_1,Y]=F_2,\qquad [F_2,Y]=-F_3,\qquad [F_3,Y]=F_2.
\end{equation}
Hence
\begin{equation}
\dot\phi_1=\phi_2,\qquad
\dot\phi_2=-\phi_3,\qquad
\dot\phi_3=\phi_2.
\end{equation}
Differentiating the last two equations gives \(\ddot\phi_2=-\dot\phi_3=-\phi_2\) and \(\ddot\phi_3=\dot\phi_2=-\phi_3\).
\end{proof}

The above two lemmas show that the reduced switching dynamics is elementary on each bang arc: it is elliptic on both \(X\)- and \(Y\)-arcs, although with different frequencies and with different variables playing the primary role. This distinction is the starting point for the explicit switching-point propagation derived in the next subsection. The different frequencies come from the original generators rather than from the choice of reduced variables: \(X\) contains \(s=\sin\gamma=1/\sqrt K\) because a global step couples the target block to the non-target blocks, whereas \(Y\) acts locally within each block and, after measuring the local evolution time in units of \(\sqrt b\) local Grover queries, contains no dependence on the number of blocks \(K\).

\subsection{\label{subsec:reduced_evolution}Reduced evolution between switching points}

We now integrate the reduced systems obtained in the previous subsection. Since a switching can occur only when \(\phi_1=\Phi=0\), the reduced data at a switching point are naturally parametrized by \((\phi_1,\phi_2,\phi_3)(0)=(0,a,b)\), where \(a=\phi_2(0)\) and \(b=\phi_3(0)\).

\begin{lemma}\label{lem:X-arc-propagation}
Suppose an \(X\)-arc starts at a switching point with reduced initial data \((\phi_1,\phi_2,\phi_3)(0)=(0,a,b)\). Then along the arc one has
\begin{equation}
\phi_1(t)=\frac{a}{\sin\gamma}\sin\bigl((\sin\gamma)t\bigr),\quad
\phi_2(t)=a\cos\bigl((\sin\gamma)t\bigr),\quad
\phi_3(t)=b+a\sin\gamma\,\sin\bigl((\sin\gamma)t\bigr).
\label{eq:X-arc-explicit}
\end{equation}
In particular, the shortest positive time at which the next switching can occur is
\begin{equation}
\tau_X=\frac{\pi}{\sin\gamma},
\label{eq:tauX}
\end{equation}
and at that time one has
\begin{equation}
(\phi_1,\phi_2,\phi_3)(\tau_X)=(0,-a,b).
\label{eq:X-return-map}
\end{equation}
\end{lemma}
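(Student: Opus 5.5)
The proof integrates the linear system of Lemma~\ref{lem:reduced-X-arc} explicitly with the initial data $(0,a,b)$. The pair $(\phi_1,\phi_2)$ decouples from $\phi_3$, and by Eq.~\eqref{eq:phi1-X-harmonic} the function $\phi_1$ solves $\ddot\phi_1+\sin^2\gamma\,\phi_1=0$. Its general solution is $\phi_1(t)=C_1\cos((\sin\gamma)t)+C_2\sin((\sin\gamma)t)$. The condition $\phi_1(0)=0$ gives $C_1=0$. The condition $\dot\phi_1(0)=\phi_2(0)=a$ gives $C_2\sin\gamma=a$, so $\phi_1(t)=\frac{a}{\sin\gamma}\sin((\sin\gamma)t)$. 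Then $\phi_2=\dot\phi_1=a\cos((\sin\gamma)t)$. One checks directly that $\dot\phi_2=-\sin^2\gamma\,\phi_1$, so the pair solves the system. For $\phi_3$, I integrate $\dot\phi_3=\sin^2\gamma\,\phi_2=a\sin^2\gamma\cos((\sin\gamma)t)$ from $0$ with $\phi_3(0)=b$, which gives $\phi_3(t)=b+a\sin\gamma\,\sin((\sin\gamma)t)$. Uniqueness of solutions to the linear ODE system gives Eq.~\eqref{eq:X-arc-explicit}. Here $\sin\gamma=1/\sqrt K>0$, so the division is legitimate.

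For the switching time, Lemma~\ref{lem:PMP-sign-rule} says a switch can occur only where $\Phi=\phi_1=0$. I assume $a\neq0$, which the regularity hypothesis guarantees since $\dot\Phi=\phi_2\neq0$ at a switching point. The zeros of $\phi_1$ on $t>0$ are then exactly $(\sin\gamma)t=k\pi$ with $k\ge1$, and the smallest is $\tau_X=\pi/\sin\gamma$. On $(0,\tau_X)$ the sign of $\phi_1$ equals the sign of $a$, so no earlier zero exists. This is consistent with the arc being an $X$-arc when $a>0$, by the sign rule, though the statement does not need that.

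Evaluating at $\tau_X$ uses $\sin\pi=0$ and $\cos\pi=-1$. This gives $\phi_1(\tau_X)=0$, $\phi_2(\tau_X)=-a$, and $\phi_3(\tau_X)=b$, which is Eq.~\eqref{eq:X-return-map}.

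The only delicate point is interpretive. The phrase ``the shortest positive time at which the next switching can occur'' has to be read as the first positive zero of $\phi_1$, and this requires excluding $a=0$. If $a=0$, then $\phi_1\equiv0$ along the arc, which is a singular situation ruled out by the regularity assumption and deferred to Sec.~\ref{subsec:singular_arcs}. I would state this explicitly. The rest is routine.
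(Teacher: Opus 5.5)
Your proposal is correct and follows the paper's proof essentially step by step. Like the paper, you solve the harmonic equation for $\phi_1$ with $\phi_1(0)=0$ and $\dot\phi_1(0)=a$, recover $\phi_2=\dot\phi_1$, integrate $\dot\phi_3=\sin^2\gamma\,\phi_2$, and read off the first positive zero $\tau_X=\pi/\sin\gamma$ together with the values $(0,-a,b)$ there; your explicit remark that $a\neq 0$ (guaranteed by regularity) is needed for $\tau_X$ to be the first zero is a useful clarification that the paper leaves implicit.
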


\begin{proof}
Along an \(X\)-arc, Lemma~\ref{lem:reduced-X-arc} gives
\begin{equation}
\dot\phi_1=\phi_2,\qquad
\dot\phi_2=-\sin^2\gamma\,\phi_1,\qquad
\dot\phi_3=\sin^2\gamma\,\phi_2.
\end{equation}
Hence \(\phi_1\) satisfies \(\ddot\phi_1+\sin^2\gamma\,\phi_1=0\), with initial conditions \(\phi_1(0)=0\) and \(\dot\phi_1(0)=\phi_2(0)=a\). Therefore
\begin{equation}
\phi_1(t)=\frac{a}{\sin\gamma}\sin\bigl((\sin\gamma)t\bigr),
\end{equation}
and then
\begin{equation}
\phi_2(t)=\dot\phi_1(t)=a\cos\bigl((\sin\gamma)t\bigr).
\end{equation}
Finally, since \(\dot\phi_3=\sin^2\gamma\,\phi_2\) and \(\phi_3(0)=b\), we obtain
\begin{equation}
\phi_3(t)=b+\int_0^t \sin^2\gamma\,\phi_2(\tau)\,d\tau
=b+a\sin\gamma\,\sin\bigl((\sin\gamma)t\bigr).
\end{equation}
This proves Eq. \eqref{eq:X-arc-explicit}.

Because \(\phi_1=\Phi\), the next switching time is the first positive zero of \(\phi_1\), namely the first positive zero of \(\sin((\sin\gamma)t)\), which is \(t=\pi/\sin\gamma\). Evaluating \eqref{eq:X-arc-explicit} at \(t=\tau_X\) gives
\begin{equation}
\phi_1(\tau_X)=0,\qquad
\phi_2(\tau_X)=a\cos\pi=-a,\qquad
\phi_3(\tau_X)=b+a\sin\gamma\,\sin\pi=b,
\end{equation}
which is exactly Eq. \eqref{eq:X-return-map}.
\end{proof}

\begin{lemma}\label{lem:Y-arc-propagation}
Suppose a \(Y\)-arc starts at a switching point with reduced initial data \((\phi_1,\phi_2,\phi_3)(0)=(0,a,b)\). Then along the arc one has
\begin{equation}
\phi_1(t)=a\sin t+b(\cos t-1),\quad
\phi_2(t)=a\cos t-b\sin t,\quad
\phi_3(t)=a\sin t+b\cos t.
\label{eq:Y-arc-explicit}
\end{equation}
The next switching time is determined by \(\phi_1(t)=0\). Besides the trivial root \(t=0\), the shortest positive switching time \(\tau_Y\in(0,2\pi)\) is characterized by
\begin{equation}
a\cos\frac{\tau_Y}{2}-b\sin\frac{\tau_Y}{2}=0,
\label{eq:Y-half-angle}
\end{equation}
equivalently, when \(b\neq 0\),
\begin{equation}
\tan\frac{\tau_Y}{2}=\frac{a}{b}.
\label{eq:Y-tan-half}
\end{equation}
At that switching time one has
\begin{equation}
(\phi_1,\phi_2,\phi_3)(\tau_Y)=(0,-a,b).
\label{eq:Y-return-map}
\end{equation}
\end{lemma}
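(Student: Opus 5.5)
We integrate the linear system of Lemma~\ref{lem:reduced-Y-arc} directly, mirroring the proof of Lemma~\ref{lem:X-arc-propagation}. On a \(Y\)-arc the pair \((\phi_2,\phi_3)\) satisfies \(\dot\phi_2=-\phi_3\), \(\dot\phi_3=\phi_2\). This is a rotation with unit angular frequency. With initial data \((a,b)\) its unique solution is
\[
\phi_2(t)=a\cos t-b\sin t,\qquad \phi_3(t)=a\sin t+b\cos t .
\]
One checks this by differentiation: \(\dot\phi_2=-a\sin t-b\cos t=-\phi_3\) and \(\dot\phi_3=a\cos t-b\sin t=\phi_2\). Integrating \(\dot\phi_1=\phi_2\) with \(\phi_1(0)=0\) then gives \(\phi_1(t)=a\sin t+b(\cos t-1)\). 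This proves Eq.~\eqref{eq:Y-arc-explicit}.

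For the switching time, we rewrite \(\phi_1\) using half-angle identities:
\[
\phi_1(t)=2a\sin\tfrac t2\cos\tfrac t2-2b\sin^2\tfrac t2=2\sin\tfrac t2\Bigl(a\cos\tfrac t2-b\sin\tfrac t2\Bigr).
\]
The factor \(\sin(t/2)\) vanishes only at \(t\in 2\pi\mathbb Z\), which gives the trivial root \(t=0\) and the endpoint \(2\pi\). So on \((0,2\pi)\) the zeros of \(\phi_1\) are exactly the zeros of \(a\cos(t/2)-b\sin(t/2)\), which is Eq.~\eqref{eq:Y-half-angle}.

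Writing \((a,b)=r(\sin\beta,\cos\beta)\), this bracket equals \(r\sin(\beta-t/2)\). For \(r>0\) it therefore has exactly one zero with \(t/2\in(0,\pi)\), namely \(t/2\equiv\beta \pmod \pi\) placed in \((0,\pi)\). If \(\beta\equiv 0 \pmod\pi\), i.e.\ \(a=0\), then the zero sits at the endpoint \(2\pi\); this degenerate case is nontransversal and is excluded by regularity (\(\phi_2\neq0\) at switching). Hence \(\tau_Y\) is well defined and unique in \((0,2\pi)\). When \(b\neq0\), dividing by \(b\cos(\tau_Y/2)\) gives Eq.~\eqref{eq:Y-tan-half}. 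Note that \(\cos(\tau_Y/2)\neq0\) here, since otherwise \(b\sin(\tau_Y/2)=0\) would force \(b=0\).

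Finally we evaluate at \(\tau_Y\). Let \(h=\tau_Y/2\), so that \(a\cos h=b\sin h\). Then
\[
\phi_2(\tau_Y)=a\cos 2h-b\sin 2h=a(\cos^2 h-\sin^2 h)-2b\sin h\cos h .
\]
Substituting \(b\sin h=a\cos h\) into the last term gives \(-2a\cos^2h\), so \(\phi_2(\tau_Y)=a(-\cos^2h-\sin^2h)=-a\).

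The computation for \(\phi_3\) is the main obstacle, because the sign must come out right. Since \(\phi_1(\tau_Y)=0\), we have \(a\sin\tau_Y=b(1-\cos\tau_Y)\). Therefore
\[
\phi_3(\tau_Y)=a\sin\tau_Y+b\cos\tau_Y=b(1-\cos\tau_Y)+b\cos\tau_Y=b .
\]
This gives Eq.~\eqref{eq:Y-return-map}.

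A useful cross-check is the conserved quantity \(\phi_2^2+\phi_3^2=a^2+b^2\) on \(Y\)-arcs. It is consistent with the return value \((-a,b)\), and it shows that \(\phi_2\) flips sign at the switch, as transversality requires.
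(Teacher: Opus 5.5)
Your proof is correct and follows the paper's route: you solve the unit-frequency rotation for \((\phi_2,\phi_3)\), integrate \(\dot\phi_1=\phi_2\), factor \(\phi_1=2\sin\frac t2\bigl(a\cos\frac t2-b\sin\frac t2\bigr)\), and evaluate at \(\tau_Y\). The only difference is in the final evaluation. The paper substitutes \(\tan\frac{\tau_Y}{2}=a/b\) into the rational double-angle formulas, which tacitly needs \(b\neq0\); your use of \(a\cos h=b\sin h\) and \(\phi_1(\tau_Y)=0\) also covers \(b=0\), and your phase argument makes explicit that \(a\neq0\) is what guarantees a unique \(\tau_Y\in(0,2\pi)\).
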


\begin{proof}
Along a \(Y\)-arc, Lemma~\ref{lem:reduced-Y-arc} gives
\begin{equation}
\dot\phi_1=\phi_2,\qquad
\dot\phi_2=-\phi_3,\qquad
\dot\phi_3=\phi_2.
\end{equation}
The last two equations form the standard planar rotation system. With initial data \((\phi_2,\phi_3)(0)=(a,b)\), we obtain
\begin{equation}
\phi_2(t)=a\cos t-b\sin t,\qquad
\phi_3(t)=a\sin t+b\cos t.
\end{equation}
Integrating \(\dot\phi_1=\phi_2\) and using \(\phi_1(0)=0\) yields
\begin{equation}
\phi_1(t)=a\sin t+b(\cos t-1),
\end{equation}
which proves Eq. \eqref{eq:Y-arc-explicit}.

To determine the next switching time, we impose \(\phi_1(t)=0\). Using \(\sin t=2\sin(t/2)\cos(t/2)\) and \(\cos t-1=-2\sin^2(t/2)\), we rewrite
\begin{equation}
\phi_1(t)=2\sin\frac{t}{2}\left(a\cos\frac{t}{2}-b\sin\frac{t}{2}\right).
\end{equation}
Thus, besides the trivial root \(t=0\), the next switching time is characterized by
\begin{equation}
a\cos\frac{t}{2}-b\sin\frac{t}{2}=0.
\end{equation}
Let \(\tau_Y\in(0,2\pi)\) denote the shortest positive solution. When \(b\neq 0\), this is equivalent to Eq. \eqref{eq:Y-tan-half}.

It remains to evaluate the reduced data at \(t=\tau_Y\). Set \(u=\tau_Y/2\) and \(r=a/b\), so that \(\tan u=r\). Then
\begin{equation}
\cos\tau_Y=\cos(2u)=\frac{1-r^2}{1+r^2},\qquad
\sin\tau_Y=\sin(2u)=\frac{2r}{1+r^2}.
\end{equation}
Substituting these into Eq. \eqref{eq:Y-arc-explicit}, we find
\begin{equation}
\phi_2(\tau_Y)=a\cos\tau_Y-b\sin\tau_Y=-a,\qquad
\phi_3(\tau_Y)=a\sin\tau_Y+b\cos\tau_Y=b.
\end{equation}
Together with \(\phi_1(\tau_Y)=0\), this proves Eq. \eqref{eq:Y-return-map}.
\end{proof}

We summarize the above two propagation lemmas as follows.
\begin{proposition}\label{prop:universal-reduced-switching-map}
Let \((0,a,b)\) be the reduced data at a switching point. Then the shortest switching-to-switching \(X\)-arc and the shortest switching-to-switching \(Y\)-arc induce the same reduced map,
\begin{equation}
(0,a,b)\mapsto (0,-a,b).
\label{eq:universal-switching-map}
\end{equation}
Equivalently, at switching points the reduced variables transform as $a\mapsto -a$ and $b\mapsto b$.
\end{proposition}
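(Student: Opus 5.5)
The proposition follows by combining Lemma~\ref{lem:X-arc-propagation} and Lemma~\ref{lem:Y-arc-propagation}, so the plan is to invoke them and then check the cases the \(Y\)-arc lemma leaves implicit. Start from reduced data \((0,a,b)\) at a switching point. For an \(X\)-arc, Eq.~\eqref{eq:X-return-map} gives directly that after \(\tau_X=\pi/\sin\gamma\) the data are \((0,-a,b)\). For a \(Y\)-arc, Eq.~\eqref{eq:Y-return-map} gives \((0,-a,b)\) at the first nontrivial zero \(\tau_Y\). The two return maps coincide, which is Eq.~\eqref{eq:universal-switching-map}.

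Next I will treat the degenerate cases in the \(Y\)-lemma. The proof there used \(\tan(\tau_Y/2)=a/b\), which requires \(b\neq0\). If \(b=0\) and \(a\neq0\), the factorization \(\phi_1(t)=2\sin(t/2)\,(a\cos(t/2)-b\sin(t/2))\) shows that the next zero satisfies \(\cos(\tau_Y/2)=0\), so \(\tau_Y=\pi\). Substituting into Eq.~\eqref{eq:Y-arc-explicit} gives \(\phi_2(\pi)=-a\) and \(\phi_3(\pi)=0=b\), so the map is still \((0,a,0)\mapsto(0,-a,0)\).

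A cleaner route avoids \(r=a/b\) altogether. Write \((\phi_2,\phi_3)(t)\) as the rotation of \((a,b)\) by angle \(t\). The condition \(a\cos(\tau/2)=b\sin(\tau/2)\) says that \((a,b)\) is orthogonal to \((\cos(\tau/2),-\sin(\tau/2))\), up to orientation. Rotating by \(\tau\) is then a reflection composed appropriately, and it sends \((a,b)\) to \((-a,b)\). This follows from the double-angle identities:
\[
a\cos\tau-b\sin\tau=-a+2\cos(\tau/2)\bigl(a\cos(\tau/2)-b\sin(\tau/2)\bigr),
\]
\[
a\sin\tau+b\cos\tau=b+2\sin(\tau/2)\bigl(a\cos(\tau/2)-b\sin(\tau/2)\bigr).
\]
Both correction terms vanish at \(\tau_Y\).

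Finally, I will note the excluded case \(a=0\). When \(b\neq0\) the \(Y\)-arc's next zero is at \(\tau_Y=2\pi\), and the map is trivially \((0,0,b)\mapsto(0,0,b)=(0,-a,b)\). When the data are identically zero this is a nonregular situation, which the regularity assumption excludes. The main obstacle is only this bookkeeping of edge cases; the identities above settle it uniformly.
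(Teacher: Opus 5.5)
Your proposal is correct and follows the paper. The paper gives no argument for this proposition beyond reading off Eqs.~\eqref{eq:X-return-map} and~\eqref{eq:Y-return-map} from Lemmas~\ref{lem:X-arc-propagation} and~\ref{lem:Y-arc-propagation}. Your double-angle identities do tighten it: the paper's check of the $Y$-return via $r=a/b$ tacitly needs $b\neq0$, while your identities cover every $a\neq0$ uniformly, including $b=0$ with $\tau_Y=\pi$. One small correction: $a=0$ is already nonregular even when $b\neq0$, and on an $X$-arc it gives $\phi_1\equiv0$, so there is no next switching at all.
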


Proposition~\ref{prop:universal-reduced-switching-map} is the key summary of the reduced propagation analysis. Although \(X\)- and \(Y\)-arcs have different interior dynamics and generally different durations, their first-return action on the switching surface \(\{\phi_1=0\}\) is identical: the transverse switching velocity \(a=\phi_2\) changes sign, while the residual parameter \(b=\phi_3\) is preserved. This universal switching map underlies the compression arguments developed below.

\section{\label{sec:singluar_arcs}Universal switching map and exclusion of singular arcs}

This section extracts the two structural consequences of the reduced switching dynamics that
will be used repeatedly below. First, we show that the reduced data at a regular switching point determine
the subsequent reduced continuation uniquely in Sec. \ref{subsec:switching_map}. Second, we prove that singular arcs are impossible for normal
time-optimal extremals in Sec. \ref{subsec:singular_arcs}.

\subsection{\label{subsec:switching_map}Universal switching map}

Recall from Proposition \ref{prop:universal-reduced-switching-map} that the shortest switching-to-switching
$X$-arc and the shortest switching-to-switching $Y$-arc induce the same reduced map, namely $(0,a,b)\mapsto (0,-a,b)$. To use this fact in compression arguments, one needs a continuation statement saying that once
the reduced data at a regular switching point are fixed, the later reduced bang--bang evolution
is fixed as well.

\begin{lemma}\label{lem:reduced-ode-uniqueness}
For each $A\in\{X,Y\}$ and each initial data $m_0\in\mathbb{R}^3$, the reduced ODE
\begin{equation}
\dot m = V_A(m),\qquad m(0)=m_0,
\end{equation}
has a unique global solution. Consequently, if two reduced trajectories for the same pure arc
type $A$ agree at one time, then they agree for all times for which both are defined.
\end{lemma}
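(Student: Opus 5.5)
The plan is to identify $V_A$ explicitly as a linear vector field and then invoke standard linear ODE theory. By Lemmas \ref{lem:reduced-X-arc} and \ref{lem:reduced-Y-arc}, writing $m=(\phi_1,\phi_2,\phi_3)^\top$, the reduced dynamics on a pure $A$-arc has the form $V_A(m)=M_A m$ with constant matrices
\begin{equation}
M_X=\begin{pmatrix}0&1&0\\-\sin^2\gamma&0&0\\0&\sin^2\gamma&0\end{pmatrix},\qquad
M_Y=\begin{pmatrix}0&1&0\\0&0&-1\\0&1&0\end{pmatrix}.
\end{equation}
These coefficients come directly from the structure constants in Lemma \ref{lem:XY-Lie-closure}. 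The matrices depend only on $\gamma$, which is fixed by $K$, and not on the particular extremal. So the first step is to record that the reduced system is autonomous, linear and time-independent on each arc type.

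For existence, define $m(t)=e^{tM_A}m_0$. The matrix exponential series converges absolutely for every $t\in\mathbb R$, and termwise differentiation gives $\dot m=M_A m$ with $m(0)=m_0$, so a global solution exists. For uniqueness, let $m$ and $\tilde m$ both solve the system with the same data, and set $w=m-\tilde m$. Then $\tfrac{d}{dt}\|w\|^2=2\langle w,M_Aw\rangle\le 2\|M_A\|\,\|w\|^2$. Gronwall's inequality, applied forward and backward in time from $t=0$, forces $w\equiv0$. Alternatively, I would cite Picard--Lindel\"of with the global Lipschitz constant $\|M_A\|$. The explicit formulas in Lemmas \ref{lem:X-arc-propagation} and \ref{lem:Y-arc-propagation} are then precisely this unique solution for data of the form $(0,a,b)$.

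For the consequence, suppose two reduced trajectories of the same arc type $A$ satisfy $m(t_0)=\tilde m(t_0)$. By time-translation invariance, both $s\mapsto m(t_0+s)$ and $s\mapsto\tilde m(t_0+s)$ solve the system with the same initial data. By uniqueness they coincide on the common interval of definition, which is connected and contains $t_0$.

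The only point needing care is conceptual rather than technical. One must make sure that the reduced variables $\phi_j=\langle p,F_j\psi\rangle$ along an actual extremal really satisfy this closed ODE, so that uniqueness for the reduced system applies to them. This is exactly the content of Lemma \ref{lem:reduced-observable-transport} combined with Lemma \ref{lem:XY-Lie-closure}, and I would just cite it. Apart from that, the argument is routine linear ODE theory, so I expect no real obstacle.
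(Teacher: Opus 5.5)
Your proposal is correct and takes essentially the paper's approach. The paper notes that $V_X$ and $V_Y$ are linear, hence globally Lipschitz, and cites Picard--Lindel\"of. Your explicit matrices $M_X, M_Y$, the matrix-exponential existence step, the Gronwall uniqueness argument and the time-translation argument for the consequence spell out that citation in detail.
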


\begin{proof}
By Lemmas \ref{lem:reduced-X-arc} and \ref{lem:reduced-Y-arc}, the reduced vector fields $V_X$ and $V_Y$ are
linear. Hence they are smooth and globally Lipschitz on $\mathbb{R}^3$. Existence and
uniqueness for the corresponding initial-value problems therefore follow from the standard Picard--Lindel\"of theorem \cite{CoddingtonLevinson1955,Hartman2002}.
\end{proof}

\begin{lemma}\label{lem:reduced-switching-continuation}
Let $m_*=(\phi_1^*,\phi_2^*,\phi_3^*)$ be reduced data at some time of a normal extremal.

\begin{enumerate}
\item If $\phi_1^*>0$, then the future reduced extremal with the initial $m_*$ is uniquely
determined and is the unique solution of
\begin{equation}
\dot m = V_X(m),\qquad m(0)=m_*,
\end{equation}
for as long as $\phi_1$ stays positive.

\item If $\phi_1^*<0$, then the future reduced extremal with the initial $m_*$ is uniquely
determined and is the unique solution of
\begin{equation}
\dot m = V_Y(m),\qquad m(0)=m_*,
\end{equation}
for as long as $\phi_1$ stays negative.

\item If $\phi_1^*=0$ and $\phi_2^*\neq 0$, then $m_*$ is a regular switching point for the
reduced dynamics, and the future reduced bang--bang extremal with the initial $m_*$ is
uniquely determined.
\end{enumerate}
\end{lemma}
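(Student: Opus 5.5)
The proof combines the sign rule of Lemma \ref{lem:PMP-sign-rule} with the uniqueness statement of Lemma \ref{lem:reduced-ode-uniqueness}, and then treats the three cases in turn.

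\emph{Cases (i) and (ii).} Suppose $\phi_1^*>0$. Because $\phi_1=\Phi$ is continuous along the extremal, it stays positive on some maximal interval $[0,t_1)$. On that interval Lemma \ref{lem:PMP-sign-rule} forces the control $A=X$. Lemma \ref{lem:reduced-X-arc} then says the reduced triple $m=(\phi_1,\phi_2,\phi_3)$ obeys the closed linear system $\dot m=V_X(m)$. This dependence rests on Lemmas \ref{lem:reduced-observable-transport} and \ref{lem:XY-Lie-closure}: the derivative of each $\phi_j$ along an $A$-arc is a linear combination of the $\phi_k$ alone, so no information from $(\psi,p)$ beyond $m$ enters. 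Lemma \ref{lem:reduced-ode-uniqueness} gives a unique solution with $m(0)=m_*$. Any reduced extremal through $m_*$ therefore coincides with it for as long as $\phi_1>0$. Case (ii) is identical, with $Y$ and $V_Y$ (Lemma \ref{lem:reduced-Y-arc}) in place of $X$ and $V_X$.

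\emph{Case (iii).} Suppose $\phi_1^*=0$ and $\phi_2^*\neq0$. On either arc type we have $\dot\phi_1=\phi_2$, so $\dot\phi_1(0)=\phi_2^*\neq0$. Hence $\phi_1$ crosses zero transversally, and its sign immediately after time $0$ equals $\operatorname{sign}\phi_2^*$, whichever generator is active. This is the sense in which $m_*$ is a regular switching point.

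The one step needing care is that this sign is fixed without knowing the control in advance. I plan to argue by consistency:
\begin{itemize}
\item If $\phi_2^*>0$, suppose for contradiction that $Y$ were used on some interval $(0,\epsilon)$. The $Y$-flow gives $\phi_1>0$ there, so Lemma \ref{lem:PMP-sign-rule} demands $X$, a contradiction.
\item The same reasoning applies with the roles of $X$ and $Y$ exchanged, and to the case $\phi_2^*<0$.
\end{itemize}
Measurable piecewise-constant controls have a well-defined control on some initial interval $(0,\epsilon)$, so this consistency argument applies. The control right after $m_*$ is therefore $X$ if $\phi_2^*>0$ and $Y$ if $\phi_2^*<0$.

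Once the control is fixed, case (i) or (ii) determines the arc up to the next zero of $\phi_1$. Lemmas \ref{lem:X-arc-propagation} and \ref{lem:Y-arc-propagation} give that zero explicitly, and Proposition \ref{prop:universal-reduced-switching-map} gives the data there as $(0,-a,b)$ with $a=\phi_2^*\neq0$. The new switching point is therefore again regular, with $\phi_2$ of opposite sign, so the control switches.

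Iterating gives a unique concatenation of arcs. Each arc has positive length, either $\pi/\sin\gamma$ or $\tau_Y>0$, so there is no accumulation of switching times and the induction covers every finite horizon. The degenerate case $a\cos(t/2)-b\sin(t/2)$ with no root in $(0,2\pi)$ on a $Y$-arc only means the arc never switches; uniqueness still holds by case (ii).
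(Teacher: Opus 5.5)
Your proposal is correct and follows the same route as the paper: the PMP sign rule fixes the generator whenever $\phi_1^*\neq 0$ and uniqueness for the linear reduced flow does the rest, while for $\phi_1^*=0$, $\phi_2^*\neq 0$ the identity $\dot\phi_1=\phi_2$ on both arc types fixes the sign of $\phi_1$ just after the switching point and hence the outgoing arc. Your extra induction over later switchings via $(0,a,b)\mapsto(0,-a,b)$ makes explicit what the paper leaves implicit; two small points: the ``degenerate'' $Y$-arc case is vacuous, since $a\cos u-b\sin u$ equals $a$ at $u=0$ and $-a$ at $u=\pi$, and what rules out accumulation of switchings is that the arc lengths alternate between the two fixed values $\pi/\sin\gamma$ and $\tau_Y$, not mere positivity of each length.
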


\begin{proof}
The first two statements are immediate from the PMP sign rule in Lemma
\ref{lem:PMP-sign-rule}: the sign of $\phi_1=\Phi$ determines the maximizing control,
and once the control is fixed, Lemma \ref{lem:reduced-ode-uniqueness} gives uniqueness of the
reduced trajectory.

For the third case, let $\phi_1^*=0$ and $\phi_2^*\neq 0$. Since
$\dot\phi_1=\phi_2$ on both $X$- and $Y$-arcs by Lemmas \ref{lem:reduced-X-arc} and \ref{lem:reduced-Y-arc},
we have
\begin{equation}
\dot\phi_1(0)=\phi_2^*\neq 0.
\end{equation}
Hence the reduced trajectory crosses the switching surface $\{\phi_1=0\}$ transversely, so
the sign of $\phi_1(t)$ for small $t>0$ is determined by the sign of $\phi_2^*$. Therefore the
outgoing maximizing arc type is uniquely selected. Once that outgoing arc type is fixed, Lemma
\ref{lem:reduced-ode-uniqueness} again gives uniqueness of the subsequent reduced
continuation.
\end{proof}

\begin{lemma}\label{lem:equal-switching-data-equal-future}
If two regular bang--bang extremals pass through switching points with the same reduced
switching data
\begin{equation}
(\phi_1,\phi_2,\phi_3)=(0,a,b),\qquad a\neq 0,
\end{equation}
then their future reduced bang--bang continuations coincide.
\end{lemma}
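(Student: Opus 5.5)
The plan is to obtain Lemma \ref{lem:equal-switching-data-equal-future} as a direct consequence of the continuation result Lemma \ref{lem:reduced-switching-continuation}(iii), made global by induction over the successive switching points. Let the two extremals pass through switching points at times $t_1$ and $t_2$ with common data $m_*=(0,a,b)$, $a\neq0$. I shift time so that both switching points sit at $t=0$, and I work only with the reduced triple $m=(\phi_1,\phi_2,\phi_3)$. By Lemmas \ref{lem:reduced-X-arc} and \ref{lem:reduced-Y-arc}, the reduced triple evolves autonomously under the linear field $V_A$ once the arc type $A$ is known. By Lemma \ref{lem:PMP-sign-rule}, $A$ is read off from the sign of $\phi_1$. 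So the reduced bang--bang evolution is a closed system in $m$, and it does not depend on the particular $(\psi,p)$ that produced it.

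First, since $\dot\phi_1(0)=a\neq0$ on either arc type, the outgoing arc type is the same for both extremals: $X$ if $a>0$ and $Y$ if $a<0$. By Lemma \ref{lem:reduced-ode-uniqueness}, both reduced trajectories then coincide with the unique solution of $\dot m=V_A(m)$, $m(0)=m_*$, up to the first positive zero of $\phi_1$. By Lemmas \ref{lem:X-arc-propagation} and \ref{lem:Y-arc-propagation}, this zero occurs at the same time, $\tau_X$ or $\tau_Y$, for both extremals. At that time both reach the data $(0,-a,b)$ given by Proposition \ref{prop:universal-reduced-switching-map}, and again $-a\neq0$. The new switching point is therefore regular, with identical data for both extremals.

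Second, I induct on the number of arcs. Suppose the two reduced trajectories agree up to the $k$-th switching point with data $(0,(-1)^k a,b)$. Applying the first step again shows that they agree up to the $(k+1)$-st switching point. Each inter-switching duration is a fixed positive number: $\pi/\sin\gamma$, or the root $\tau_Y$ of Eq.~\eqref{eq:Y-half-angle}, which depends only on $(a,b)$ up to sign. Hence the switching times cannot accumulate, and the union of these intervals covers the whole common interval of definition. So the future continuations coincide as functions of time, and in particular the switching sequences coincide.

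The main obstacle I expect is making sure the comparison is not contaminated by behaviour that the PMP alone leaves open. One issue is whether an extremal might stay on an arc past the first zero of $\phi_1$, or switch without $\phi_1$ vanishing. In both cases $\phi_1$ would change sign transversally while the active arc is the wrong one, which Lemma \ref{lem:PMP-sign-rule} forbids almost everywhere. The other issue is that a $Y$-arc might have $\phi_1$ vanishing non-transversally. This cannot happen, because the return data always have $\phi_2=\mp a\neq0$, so $\phi_1\not\equiv0$ and singular behaviour is excluded. I would state these two points explicitly inside the inductive step.
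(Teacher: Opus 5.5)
Your proof is correct and follows the same route as the paper, whose proof simply invokes the third case of Lemma~\ref{lem:reduced-switching-continuation}: the sign rule together with $\dot\phi_1=\phi_2$ fixes the outgoing arc type, and ODE uniqueness then fixes the reduced trajectory. You go further with an explicit induction over successive switching points. The return map $(0,a,b)\mapsto(0,-a,b)$ keeps every switch regular, and the fixed positive durations $\pi/\sin\gamma$ and $\tau_Y$ rule out accumulation of switching times. This makes rigorous the passage from the outgoing germ to the whole future continuation, which the paper's one-line proof leaves implicit.
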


\begin{proof}
At a regular switching point one has $\phi_1=0$ and $\phi_2=a\neq 0$. Therefore Lemma
\ref{lem:reduced-switching-continuation} applies and uniquely determines the outgoing
reduced bang--bang germ from the data $(0,a,b)$. Hence two extremals with the same reduced
switching data have the same future reduced continuation.
\end{proof}

\begin{proposition}\label{prop:universal-switching-continuation}
At a regular switching point, the reduced switching data determine the future reduced bang--bang
continuation uniquely. Equivalently, the universal switching map $(0,a,b)\mapsto (0,-a,b)$ can be used as a genuine propagation rule on the switching surface: once the reduced data at
the endpoint of a bang are prescribed, the later reduced extremal is fixed.
\end{proposition}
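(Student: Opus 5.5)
The plan is to assemble Proposition~\ref{prop:universal-switching-continuation} from Lemma~\ref{lem:equal-switching-data-equal-future} and Proposition~\ref{prop:universal-reduced-switching-map}, using induction over the successive bangs. We start at a regular switching point with data $m_0=(0,a,b)$, $a\neq 0$. By Lemma~\ref{lem:reduced-switching-continuation}(iii), the sign of $a$ selects the outgoing arc type: $X$ if $a>0$ and $Y$ if $a<0$, since $\dot\phi_1=\phi_2$ on both arcs. Lemma~\ref{lem:reduced-ode-uniqueness} then fixes the reduced trajectory on that arc. It remains in force as long as $\phi_1$ keeps the sign selected by the PMP rule of Lemma~\ref{lem:PMP-sign-rule}, which holds up to the first positive zero of $\phi_1$.

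Next, we identify that first zero and the data there. Lemmas~\ref{lem:X-arc-propagation} and~\ref{lem:Y-arc-propagation} give the explicit solution, so the first zero is $\tau_X=\pi/\sin\gamma$ or the $\tau_Y$ defined by \eqref{eq:Y-half-angle}. The data at that time are $(0,-a,b)$, as stated in Proposition~\ref{prop:universal-reduced-switching-map}. Because $-a\neq 0$, the new point is again a regular switching point. The sign of $\phi_2$ has flipped, so the outgoing arc is the other generator, consistent with alternation. We can then apply Lemma~\ref{lem:equal-switching-data-equal-future} (or simply repeat the argument above) with data $(0,-a,b)$. By induction on the number of switchings, the whole future bang--bang continuation is determined by $(0,a,b)$. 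The switching data alternate between $(0,a,b)$ and $(0,-a,b)$, and the arc durations are computable from $a$ and $b$ alone. This is exactly the claim that the map $(0,a,b)\mapsto(0,-a,b)$ is a propagation rule on $\{\phi_1=0\}$.

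The main obstacle is the case where the first zero of $\phi_1$ on a $Y$-arc degenerates. If $b=0$, then \eqref{eq:Y-half-angle} becomes $a\cos(t/2)=0$, so $\tau_Y=\pi$, and the formula still gives $(0,-a,0)$. One must also check that $\phi_1$ does not vanish tangentially before $\tau_Y$. For this I would use the factorization $\phi_1(t)=2\sin(t/2)\bigl(a\cos(t/2)-b\sin(t/2)\bigr)$ from the proof of Lemma~\ref{lem:Y-arc-propagation}. The second factor is a sinusoid in $t/2$ with a simple zero, so $\phi_2=\dot\phi_1\neq0$ at $\tau_Y$, which is consistent with the $-a$ obtained there. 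A further point: the continuation must remain meaningful at the terminal time $T$, where the trajectory may stop before the next zero. Uniqueness on the partial final arc still follows from Lemma~\ref{lem:reduced-ode-uniqueness}. So the statement concerns only the reduced extremal up to wherever it is terminated, and is not affected by this.
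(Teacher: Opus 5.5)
Your proposal is correct and follows the paper's route. The paper's proof likewise combines Lemma~\ref{lem:equal-switching-data-equal-future}, the first-return map of Proposition~\ref{prop:universal-reduced-switching-map}, and Lemma~\ref{lem:reduced-switching-continuation}; your explicit induction over successive bangs, the observation that $-a\neq 0$ keeps each new switching point regular, and the check that the zero at $\tau_Y$ is simple (including $b=0$) spell out what the paper leaves implicit.
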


\begin{proof}
The first statement is exactly Lemma \ref{lem:equal-switching-data-equal-future}. The second is an
equivalent reformulation using Proposition \ref{prop:universal-reduced-switching-map}: the reduced first-return
map specifies the next switching datum, and Lemma \ref{lem:reduced-switching-continuation} then
determines the later reduced bang--bang continuation uniquely.
\end{proof}

\subsection{\label{subsec:singular_arcs}Exclusion of singular arcs}

We now show that normal time-optimal extremals cannot contain singular subarcs of positive length in our case. This exclusion is needed because a singular subarc would represent an interval on which the PMP does not select the control through the sign of the switching function, thereby obstructing the reduction of candidate minimizers to a genuinely bang--bang switching structure and, in particular, the compression arguments used later.

\begin{definition}\label{def:singular-arc}
A singular arc is an interval $I\subset[0,T]$ on which the switching function vanishes
identically $\Phi(t)=0$ for all $t\in I$. Equivalently, since $\phi_1=\Phi$, it is an interval on which $\phi_1(t)\equiv 0$.
\end{definition}

\begin{lemma}\label{thm:no-singular-arcs}
A normal time-optimal extremal composed of $X$- and $Y$-arcs contains no singular arc of positive length.
\end{lemma}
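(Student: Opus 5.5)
The plan is to argue by contradiction. Suppose $\Phi=\phi_1\equiv 0$ on an interval $I$ of positive length. On $I$ the control need not be bang--bang, so I would allow a measurable relaxed control $u_X(t)X+u_Y(t)Y$ with $u_X+u_Y=1$ and $u_X,u_Y\ge 0$.

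\textbf{Step 1: the arc is a pure $X$-arc.} Lemma~\ref{lem:reduced-observable-transport} is linear in the generator, so Lemmas~\ref{lem:reduced-X-arc} and~\ref{lem:reduced-Y-arc} combine on $I$ to
\[
\dot\phi_1=\phi_2,\qquad \dot\phi_2=-u_X\sin^2\gamma\,\phi_1-u_Y\phi_3 .
\]
From $\phi_1\equiv0$ we get $\phi_2\equiv 0$, and then $u_Y\phi_3=0$ almost everywhere on $I$. To evaluate $\phi_3$, I would use the normal PMP condition for free terminal time. The maximized PMP function vanishes along the extremal, and $\Phi=0$ means both controls are maximizers, so $\langle p,X\psi\rangle=\langle p,Y\psi\rangle=1$ on $I$.

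Next I would show that $F_3$ lies in $\operatorname{span}\{X,Y\}$. Identify $\mathfrak{so}(3)$ with $\mathbb R^3$ and use $a\times(b\times a)=|a|^2b-(a\cdot b)a$. The vectors of $X$ and $Y$ have $|Y|^2=1$ and $X\cdot Y=s^2$, and a direct $3\times3$ check confirms the signs. This should give
\[
F_3=[Y,[X,Y]]=X-\sin^2\gamma\,Y .
\]
Since $F\mapsto\langle p,F\psi\rangle$ is linear,
\[
\phi_3=\langle p,X\psi\rangle-\sin^2\gamma\,\langle p,Y\psi\rangle=\cos^2\gamma\neq 0 .
\]
Hence $u_Y=0$ almost everywhere on $I$, and the singular arc is a pure $X$-arc carrying reduced data $(0,0,\cos^2\gamma)$. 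This data is a fixed point of the $X$-flow in Lemma~\ref{lem:reduced-X-arc}, so the computation is self-consistent.

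\textbf{Step 2: exclude singular $X$-arcs.} This is where I expect the real obstacle, because the algebra alone is consistent. I would first use Lemma~\ref{lem:reduced-switching-continuation} and uniqueness for the linear reduced ODE. Any $X$-continuation from the data $(0,0,\cos^2\gamma)$ keeps $\Phi\equiv0$. A $Y$-continuation gives
\[
\phi_1=\cos^2\gamma\,(\cos t-1)\le 0,
\]
which never returns transversally to a positive value.

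Running the same argument backwards in time, the maximal singular arc can only be entered or left through $Y$-arcs of this type. Since $\phi_1\le 0$ along them and there are no transversal switchings, the extremal reduces to a pattern $Y$--singular $X$--$Y$, with any of the three pieces possibly empty.

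Finally, I would confront this pattern with the boundary data. The transversality condition at $T$ gives $p(T)\parallel\ket{u}$. Together with $\langle p(T),X\psi(T)\rangle=1$ and $\phi_1(T)=0$, this should fix $\psi(T)$ and $p(T)$. Combined with $\psi(0)=\ket{s_n}$, it should then force $\langle p,Y\psi\rangle$ to take a value inconsistent with $1$, or force $\phi_3$ at $T$ to differ from $\cos^2\gamma$.

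Should that direct check fail, my fallback is a comparison argument. The reachable candidates then have only two free durations, and I would show that they hit $\Sigma$ no sooner than the pure global run of length $\pi/(2\sin\gamma)$ in rescaled time. An explicit global--local--global trajectory beats that time, which contradicts time-optimality.
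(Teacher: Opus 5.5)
Your proposal is correct, and it takes a genuinely different and more complete route than the paper. After getting $\phi_1\equiv\phi_2\equiv0$, the paper handles $Y$-portions by noting that there $\phi_3=0$ as well, so the control is ``not selected''; it then declares such a portion not to be a genuine bang arc. It removes the surviving singular $X$-arc by deleting a subinterval $[t_-,t_+]$ on which the reduced vector $(\phi_1,\phi_2,\phi_3)$ is constant.

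You instead combine the free-time condition $\max\{\langle p,X\psi\rangle,\langle p,Y\psi\rangle\}=1$ with the identity $F_3=X-\sin^2\gamma\,Y$. That identity is correct and consistent with Lemma~\ref{lem:XY-Lie-closure}. Together they give $\phi_3=\cos^2\gamma$ wherever $\Phi=0$, so $Y$-portions are excluded by computation rather than by convention. You also do this for relaxed measurable controls, not only piecewise-constant ones.

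Your remark that ``the algebra alone is consistent'' for a singular $X$-arc is exactly the point the paper's excision passes over. Deleting $[t_-,t_+]$ preserves the reduced data but not the state $\psi$, which keeps moving under the $X$-flow. So the shortened curve is generally not an admissible trajectory from $\ket{s_n}$ to $\Sigma$. Invoking terminal transversality, as you propose, is what actually closes the argument. The cost is two standard PMP ingredients ($H\equiv0$ and transversality) that the paper never uses.

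You left the boundary check as ``should'', but it takes one line and needs neither $\psi(0)=\ket{s_n}$ nor your fallback comparison. Transversality gives $p(T)=\lambda\ket{u}$. If you pose the problem on the unit sphere, $p(T)\in\mathrm{span}\{\ket{u},\psi(T)\}$, and the $\psi(T)$-component drops out of $\langle p,A\psi\rangle$ for skew $A$. Since $\langle u|Y=0$, you get $\langle p(T),Y\psi(T)\rangle=0$. Then $H(T)=0$ forces $\langle p(T),X\psi(T)\rangle=1$, hence $\Phi(T)=1$.

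On the other hand, your Step 2 already shows that once the reduced data reach $(0,0,\cos^2\gamma)$, $\Phi\le0$ at every later time. An $X$-continuation keeps $\Phi\equiv0$. A $Y$-continuation gives $\Phi=\cos^2\gamma\,(\cos t-1)\le0$ and returns only to the same non-transversal point at $t=2\pi$, so $\Phi$ can never become positive. Continuity of $\Phi$ then contradicts $\Phi(T)=1$. This also makes the ``$Y$--singular $X$--$Y$'' classification and any separate treatment of $2\pi$-loops unnecessary.
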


\begin{proof}
Assume that $I$ is a singular interval. Then, by definition, we have $\phi_1(t)=0$ with $t\in I$. Differentiating and using $\dot\phi_1=\phi_2$ on both $X$- and $Y$-arcs, given by Lemmas \ref{lem:reduced-X-arc} and \ref{lem:reduced-Y-arc}, we obtain $\phi_2(t)=0$ with $t\in I$. 

We next use the reduced equations to determine which controls are compatible with
$\phi_1\equiv\phi_2\equiv 0$ on $I$. On a $Y$-arc, Lemma \ref{lem:reduced-Y-arc} gives $\dot\phi_2=-\phi_3$. Since $\phi_2\equiv 0$ on $I$, this implies $\phi_3(t)=0$ almost everywhere on the portion of $I$ where the active control is $Y$. Thus, on any such
portion, the full reduced triple is $(\phi_1,\phi_2,\phi_3)=(0,0,0)$. But then the two PMP function values associated with $X$ and $Y$ are equal there, so the control is
not selected by the PMP sign rule. In the present bang--bang analysis, every nontrivial bang subarc of a normal extremal is understood as a regular PMP-selected arc. Therefore such a portion cannot be a genuine regular
bang arc. Hence a singular interval cannot contain a nontrivial $Y$-subarc. Therefore $u_Y(t)=0$ for almost every $t\in I$, so the singular interval must be a pure $X$-arc almost everywhere.

Along a pure $X$-arc, Lemma \ref{lem:reduced-X-arc} gives $\dot\phi_3=\sin^2\gamma\,\phi_2$. Since $\phi_2\equiv 0$ on $I$, it follows that $\dot\phi_3=0$ on $I$. Hence $\phi_3$ is constant on $I$. Together with $\phi_1\equiv 0$ and $\phi_2\equiv 0$, this
shows that the full reduced moment vector $m(t)=(\phi_1(t),\phi_2(t),\phi_3(t))$ is constant on $I$.

Let $t_-<t_+$ be two distinct times in $I$. Then we have $m(t_-)=m(t_+)$. Moreover, since the singular interval is a pure $X$-arc almost everywhere, the outgoing
maximizing arc immediately after $t_-$ and immediately after $t_+$ is of the same type.
Therefore the same reduced ODE with the same initial data governs both future
continuations. By Lemma \ref{lem:reduced-ode-uniqueness}, the future reduced continuations
from $t_-$ and $t_+$ coincide. Hence the whole subarc $[t_-,t_+]$ may be deleted without
changing the later reduced evolution. Deleting it strictly decreases the total time,
contradicting optimality. Therefore a singular arc cannot have positive length.
\end{proof}

The above proof also implies that any singular interval would have to be a pure $X$-arc on which
the reduced moment vector is constant, hence a removable subarc. Thus singular behavior is
incompatible with time-optimality precisely because it produces no nontrivial reduced
propagation while still consuming positive time.

\section{\label{sec:optimality}Asymptotic structural optimality}

In this section, we complete the asymptotic structural optimality proof for the GRK ordering. We first show in Sec. \ref{subsec:compression_three} that interior three-bang blocks of type $YXY$ and $XYX$ can be compressed to single bangs with the same reduced effect and shorter time. We then use this in Sec. \ref{subsec:final_structure} to prove that every regular normal time-optimal extremal has the pattern $XYX$.

\subsection{\label{subsec:compression_three}Compression of three-bang blocks}

We now turn the reduced switching analysis into a genuine compression principle. By Proposition \ref{prop:universal-reduced-switching-map}, the shortest switching-to-switching $X$- and $Y$-arcs induce the same reduced map on the switching surface. By Proposition \ref{prop:universal-switching-continuation}, once the reduced switching data at a regular switching point are fixed, the later reduced bang--bang continuation is fixed as well. Therefore, if a multi-bang block has the same reduced endpoint data as a single bang but takes more time, it can be replaced by that shorter bang without changing the later reduced evolution. Throughout this section, a bang is a maximal interval on which the active generator is fixed. Hence adjacent identical symbols are merged into one longer bang, and non-alternating strings such as \(XXY\) or \(XYY\) are not distinct three-bang blocks but two-bang patterns after this merging.

\begin{theorem}\label{thm:compress-YXY}
An interior three-bang block $Y(\ell_1)\,X(\tau_X)\,Y(\ell_2)$ with switching time $\tau_X=\pi/\sin\gamma$ given by Eq. \eqref{eq:tauX} and $\ell_1,\ell_2\in(0,2\pi)$ on a regular normal bang--bang extremal is reducible to a single switching-to-switching $X$-arc with shorter time.
\end{theorem}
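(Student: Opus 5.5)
The plan is to work entirely on the switching surface $\{\phi_1=0\}$ with the universal map of Proposition~\ref{prop:universal-reduced-switching-map}, and then invoke Proposition~\ref{prop:universal-switching-continuation} to conclude that the later evolution is unchanged. Since the block $Y(\ell_1)X(\tau_X)Y(\ell_2)$ is interior to a regular normal extremal, each of its three bangs starts and ends at a regular switching point. Let the reduced data at the entry of the first $Y$-bang be $(0,a,b)$ with $a\neq 0$. Because $\ell_1\in(0,2\pi)$ is a switching-to-switching $Y$-arc, Lemma~\ref{lem:Y-arc-propagation} identifies $\ell_1$ with the shortest positive root $\tau_Y$ of $\phi_1$. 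Hence the data at the $Y\to X$ switch are $(0,-a,b)$.

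The $X$-bang has length exactly $\tau_X=\pi/\sin\gamma$, so Lemma~\ref{lem:X-arc-propagation} gives $(0,a,b)$ at the $X\to Y$ switch. The second $Y$-bang then returns the data to $(0,-a,b)$. The composite map of the whole block is therefore
\begin{equation}
(0,a,b)\mapsto(0,-a,b)\mapsto(0,a,b)\mapsto(0,-a,b),
\end{equation}
which is the same net map as a single shortest switching-to-switching arc of either type.

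The replacement candidate is a single $X$-arc of length $\tau_X$. By Lemma~\ref{lem:X-arc-propagation}, starting from $(0,a,b)$ it also ends at $(0,-a,b)$. Proposition~\ref{prop:universal-switching-continuation} then shows that the reduced continuation after the replaced block coincides with the original one. The time comparison is immediate: the original block lasts $\ell_1+\tau_X+\ell_2$ and the replacement lasts $\tau_X$, saving $\ell_1+\ell_2>0$. This matches the deletion argument used in Lemma~\ref{thm:no-singular-arcs}.

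The main obstacle is justifying the substitution at the level of the actual extremal, not just the reduced variables. Two points need care.

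First, the outgoing control selected at the entry point by the sign of $a$ must be compatible with an $X$-arc. Entering the original block with a $Y$-bang means $\phi_1<0$ just after the switch, i.e.\ $a<0$. An $X$-arc from $(0,a,b)$ then has $\phi_1(t)=(a/\sin\gamma)\sin((\sin\gamma)t)$ negative on its interior, which conflicts with the sign rule of Lemma~\ref{lem:PMP-sign-rule}. I would handle this by arguing as the paper does in the singular-arc lemma: the compression is a statement that the concatenation with the block replaced reproduces the same reduced data at the next switch and the same subsequent evolution in strictly less time. That contradicts optimality of the original extremal, so no PMP-consistency of the shortened trajectory is needed.

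Second, if the argument must also control the state $\psi$ and not only the reduced data, I would note that the terminal condition and the later switching pattern are encoded through Proposition~\ref{prop:universal-switching-continuation}. I would rely on that proposition as stated rather than attempting a full state-level comparison.
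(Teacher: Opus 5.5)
You follow the paper's proof step for step: three applications of the universal map, replacement by one $X$-arc, an appeal to Proposition~\ref{prop:universal-switching-continuation}, and a duration count. Your time count is actually more accurate than the paper's. Both $Y$-bangs start from the same data $(0,a,b)$, so Lemma~\ref{lem:Y-arc-propagation} gives $\tan(\ell_1/2)=\tan(\ell_2/2)=a/b$, hence $\ell_1=\ell_2$ rather than $\ell_1+\ell_2=2\pi$; positivity is all you need.

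The genuine gap is your ``second point''. You set it aside, but it is exactly where the contradiction with optimality has to come from, and it fails.
\begin{itemize}
\item Identify skew matrices with vectors via $F\psi=\omega_F\times\psi$. Then $\phi_i=\langle p,F_i\psi\rangle=\omega_{F_i}\cdot(\psi\times p)$. Since $F_1,F_2,F_3$ span $\mathfrak{so}(3)$, the reduced triple is just the vector $M=\psi\times p$, transported by the same rotation as $\psi$.
\item Let $R_B=e^{\ell_2Y}e^{\tau_XX}e^{\ell_1Y}$ be the block propagator and $R_X=e^{\tau_XX}$ the half-turn. Both map $M_{\mathrm{in}}$ to $M_{\mathrm{out}}$, so $R_X^{-1}R_B$ is a rotation about $M_{\mathrm{in}}$.
\item This rotation is not the identity. $R_B=R_X$ would force $e^{\ell_2Y}=R_Xe^{-\ell_1Y}R_X^{-1}$, i.e.\ equality of two nontrivial rotations about the lines $\ker Y$ and $R_X\ker Y$. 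These lines differ, because $\ker X$ and $\ker Y$, spanned by $(0,\cos\gamma,-\sin\gamma)^\top$ and $(0,0,1)^\top$, are neither parallel nor orthogonal.
\item Since $\psi_{\mathrm{in}}\perp M_{\mathrm{in}}\neq 0$, this rotation moves $\psi_{\mathrm{in}}$. So the shortened trajectory always leaves the block at a different state.
\item With the later controls unchanged, its endpoint is a nontrivial rotation of the original endpoint in $\Sigma$, and nothing forces its $\ket{u}$-amplitude to vanish.
\end{itemize}
Proposition~\ref{prop:universal-switching-continuation} cannot repair this, since it constrains only the reduced triple. Your observation that $a<0$ makes the replacing $X$-arc PMP-inconsistent is correct, and dropping PMP-consistency is legitimate. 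But that only moves the whole burden onto admissibility of the competitor, which is never established. The paper's proof passes over the same step.

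For the purpose the statement serves downstream (excluding interior $YXY$ blocks in Theorem~\ref{thm:at-most-two-switchings}), you can avoid surgery and compare with an admissible competitor instead. The middle bang alone lasts $\tau_X=\pi/\sin\gamma$. By the computation in Lemma~\ref{lem:first-arc-global-app}, the pure $X$-trajectory from $\ket{s_n}$ has $\ket{u}$-amplitude $\cos\gamma\cos(\tau\sin\gamma)$, so it reaches $\Sigma$ at $\tau=\pi/(2\sin\gamma)<\tau_X$. Any trajectory containing such a block is therefore slower than pure global search and cannot be time-optimal. Note that this proves non-occurrence, not the literal replacement claim; by the argument above, that replacement does not preserve the state.
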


\begin{proof}
Let the incoming switching point have reduced data $(\phi_1,\phi_2,\phi_3)=(0,a,b)$ with $a\neq 0$. By Proposition \ref{prop:universal-reduced-switching-map}, the first $Y$-arc sends $(0,a,b)\mapsto (0,-a,b)$, the intermediate $X$-arc sends $(0,-a,b)\mapsto (0,a,b)$, and the final $Y$-arc sends $(0,a,b)\mapsto (0,-a,b)$. Hence the full $YXY$ block induces the same reduced endpoint map $(0,a,b)\mapsto (0,-a,b)$ as a single switching-to-switching $X$-arc.

Therefore the endpoint reduced switching data of the three-bang block coincide with those of a single $X$-arc issued from the same incoming switching point. Since the endpoint is regular, Proposition \ref{prop:universal-switching-continuation} implies that the later reduced bang--bang continuation is unchanged after replacing the block by that single $X$-arc.

It remains to compare durations. By Lemma \ref{lem:X-arc-propagation}, the switching-to-switching $X$-arc has length $\tau_X=\pi/\sin\gamma$. By Lemma \ref{lem:Y-arc-propagation}, the two $Y$-arc lengths lie in $(0,2\pi)$, and for two consecutive $Y$-arcs separated by one intermediate bang one has $\ell_1+\ell_2=2\pi$. Hence the total length of the $YXY$ block is
\begin{equation}
L_{YXY}=\ell_1+\frac{\pi}{\sin\gamma}+\ell_2
=2\pi+\frac{\pi}{\sin\gamma}
>
\frac{\pi}{\sin\gamma}
=\tau_X.
\end{equation}
So the replacing single $X$-arc is strictly shorter. Therefore an interior $YXY$ block cannot occur on a regular normal time-optimal extremal.
\end{proof}

\begin{theorem}\label{thm:compress-XYX}
An interior three-bang block $X(\tau_X)\,Y(\ell)\,X(\tau_X)$ with switching time $\tau_X=\pi/\sin\gamma$ given by Eq. \eqref{eq:tauX} and $\ell\in(0,2\pi)$ on a regular normal bang--bang extremal is reducible to a single switching-to-switching $Y$-arc with shorter time.
\end{theorem}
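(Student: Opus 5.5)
I will mirror the proof of Theorem \ref{thm:compress-YXY}, with the roles of $X$ and $Y$ interchanged.

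\textbf{Reduced endpoint map.} Let the incoming switching point have reduced data $(\phi_1,\phi_2,\phi_3)=(0,a,b)$ with $a\neq 0$; this is guaranteed by regularity, since $\dot\Phi=\phi_2\neq 0$ at switchings. Applying Proposition \ref{prop:universal-reduced-switching-map} three times gives the chain
\begin{equation*}
(0,a,b)\xrightarrow{X(\tau_X)}(0,-a,b)\xrightarrow{Y(\ell)}(0,a,b)\xrightarrow{X(\tau_X)}(0,-a,b).
\end{equation*}
The $X$-steps use Lemma \ref{lem:X-arc-propagation}. The $Y$-step uses Lemma \ref{lem:Y-arc-propagation} started from $(0,-a,b)$, where $\ell$ is the first positive root of $-a\cos(\ell/2)-b\sin(\ell/2)=0$ in $(0,2\pi)$.

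So the $XYX$ block induces $(0,a,b)\mapsto(0,-a,b)$. This is the same map as a single switching-to-switching $Y$-arc issued from $(0,a,b)$, whose length $\ell'\in(0,2\pi)$ is fixed by $a\cos(\ell'/2)-b\sin(\ell'/2)=0$. The outgoing endpoint is regular because $-a\neq 0$. Proposition \ref{prop:universal-switching-continuation} then shows that the later reduced continuation is unchanged by the replacement.

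\textbf{Duration comparison.} The block has length
\begin{equation*}
L_{XYX}=\frac{2\pi}{\sin\gamma}+\ell .
\end{equation*}
The replacement has length $\ell'<2\pi$. Since $\sin\gamma=1/\sqrt K\le 1$, we have $2\pi/\sin\gamma\ge 2\pi>\ell'$, hence $L_{XYX}>\ell'$ strictly. This holds even without relating $\ell$ to $\ell'$. For completeness, the half-angle conditions give $\ell/2\equiv \pi-\ell'/2$ modulo $\pi$, so $\ell+\ell'=2\pi$. This is the same complementarity used in Theorem \ref{thm:compress-YXY}, but the argument does not need it.

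\textbf{Main obstacle.} The delicate point is the degenerate case $b=0$. There Lemma \ref{lem:Y-arc-propagation} gives $\cos(\ell'/2)=0$, so $\ell'=\pi$, which is still in $(0,2\pi)$. The same happens when the first positive root sits at the boundary of the interval. I would check directly that $\ell'$ is still well defined and strictly less than $2\pi$, so the strict inequality survives.

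I would also note that the replacement arc must actually be admissible: its sign must be compatible with the PMP rule. This matches Theorem \ref{thm:compress-YXY}, where the single arc is taken as the canonical bang selected by the same switching data. With this, an interior $XYX$ block is strictly longer than an equivalent single $Y$-arc. By the same deletion argument as in Lemma \ref{thm:no-singular-arcs}, such a block therefore cannot occur on a time-optimal regular normal extremal.
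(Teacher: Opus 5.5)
Your proposal follows the paper's proof step for step: three applications of Proposition~\ref{prop:universal-reduced-switching-map}, continuation via Proposition~\ref{prop:universal-switching-continuation}, then a length comparison. Your bookkeeping $\ell'=2\pi-\ell$ is actually more accurate than the paper, which reuses $\ell$ for the replacement arc.

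\textbf{The gap.} The genuine gap, which the paper's proof shares, is the final inference that a shorter replacement with the same reduced endpoint data contradicts time-optimality.

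Write $Fv=\omega_F\times v$ for $F\in\mathfrak{so}(3)$. Then $\langle p,F\psi\rangle=\omega_F\cdot(\psi\times p)$, and $\omega_{F_1},\omega_{F_2},\omega_{F_3}$ are linearly independent. So $(\phi_1,\phi_2,\phi_3)$ is just the vector $L=\psi\times p$ in other coordinates, and it does not determine $\psi$.

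Concretely, $e^{\tau_X X}$ is the half-turn about $\ker X=\operatorname{span}\{(0,\cos\gamma,-\sin\gamma)\}$. Hence the block $e^{\tau_X X}e^{\ell Y}e^{\tau_X X}$ is a rotation by $\ell$ about $(0,\sin2\gamma,\cos2\gamma)$, whereas $e^{\ell' Y}$ rotates about $(0,0,1)$. Both send $L_{\mathrm{in}}$ to $L_{\mathrm{out}}$, so they differ by a nontrivial rotation about $L_{\mathrm{in}}$. That rotation moves $\psi_{\mathrm{in}}$, since $\psi_{\mathrm{in}}\perp L_{\mathrm{in}}$.

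The spliced curve is therefore at a different state after the block, and nothing shows that the original later controls still steer it into $\Sigma$. In fact they do not. Transversality gives $p(T)\parallel\ket{u}$, so $L(T)$ and $\psi(T)$ are orthogonal vectors lying in $\Sigma$. The spliced endpoint is $\psi(T)$ rotated about $L(T)$ by an angle $\beta$, and a short quaternion computation gives $|\cos(\beta/2)|=1-2\sin^2\gamma\,\sin^2(\ell/2)\in(0,1)$ for $K>2$. Hence the spliced endpoint has $\langle u|\psi\rangle\neq 0$. No faster admissible trajectory is produced, so no contradiction follows. The deletion step of Lemma~\ref{thm:no-singular-arcs} that you cite rests on the same conflation of reduced data with the state.

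\textbf{Smaller points.} Your admissibility concern is resolved the wrong way. The block opens with an $X$-bang, so the incoming datum has $a=\dot\Phi>0$ and the PMP selects $X$ there. Along the replacement $Y$-arc, $\Phi(t)=2\sin(t/2)\bigl(a\cos(t/2)-b\sin(t/2)\bigr)>0$ on $(0,\ell')$, so it is not an extremal arc; it also merges with the neighbouring $Y$-bangs. The same happens in Theorem~\ref{thm:compress-YXY}.

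Your $b=0$ worry is harmless: $a\cos(t/2)-b\sin(t/2)$ equals $a$ at $t=0$ and $-a$ at $t=2\pi$, so $\ell'$ always lies strictly inside $(0,2\pi)$.

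\textbf{A valid route.} If what you need downstream is only that interior $XYX$ blocks cannot occur on time-optimal extremals, you can bypass compression entirely. Each $X$-arc of the block runs from switching to switching, so it has length exactly $\pi/\sin\gamma$ by Lemma~\ref{lem:X-arc-propagation}. By the computation in Lemma~\ref{lem:first-arc-global-app}, the pure $X$-flow from $\ket{s_n}$ already reaches $\Sigma$ at $\tau=\pi/(2\sin\gamma)$. Any trajectory containing the block is therefore slower than plain global search and cannot be time-optimal.
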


\begin{proof}
Let the incoming switching point have reduced data $(\phi_1,\phi_2,\phi_3)=(0,a,b)$ with $a\neq 0$. Applying Proposition \ref{prop:universal-reduced-switching-map} successively, the first $X$-arc sends $(0,a,b)\mapsto (0,-a,b)$, the intermediate $Y$-arc sends $(0,-a,b)\mapsto (0,a,b)$, and the final $X$-arc sends $(0,a,b)\mapsto (0,-a,b)$. Thus the full $XYX$ block induces the same reduced endpoint map $(0,a,b)\mapsto (0,-a,b)$ as a single switching-to-switching $Y$-arc. Hence the endpoint reduced switching data of the three-bang block coincide with those of a single $Y$-arc issued from the same incoming switching point. By Proposition \ref{prop:universal-switching-continuation}, the later reduced bang--bang continuation is unchanged after replacing the block by that single $Y$-arc.

Comparing durations, the total length of the $XYX$ block is
\begin{equation}
L_{XYX}=\frac{\pi}{\sin\gamma}+\ell+\frac{\pi}{\sin\gamma}
=\frac{2\pi}{\sin\gamma}+\ell,
\end{equation}
whereas the replacing $Y$-arc has length $\ell\in(0,2\pi)$ by Lemma \ref{lem:Y-arc-propagation}. Since $2\pi/\sin\gamma+\ell>\ell$, the three-bang block is strictly longer. Therefore an interior $XYX$ block cannot occur on a regular normal time-optimal extremal.
\end{proof}

\subsection{\label{subsec:final_structure}At most two switchings and asymptotic structural optimality}

In \ref{app:endpoint}, we show that the optimal extremal must start with $X$ and end with $X$. Combining this with the compression results of Sec. \ref{subsec:compression_three}, we obtain the unique nontrivial switching structure of a regular normal time-optimal extremal.

\begin{theorem}\label{thm:at-most-two-switchings}
Any regular normal time-optimal bang--bang extremal has at most two switchings. Equivalently, it contains at most three bangs. In particular, the only nontrivial optimal pattern is $XYX$.
\end{theorem}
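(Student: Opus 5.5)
The plan is to combine the endpoint structure from the appendix with the two compression theorems (Theorems \ref{thm:compress-YXY} and \ref{thm:compress-XYX}) and a simple combinatorial argument on the bang word. First, Lemma \ref{thm:no-singular-arcs} excludes singular arcs, so a regular normal time-optimal extremal is a finite concatenation of bangs. After merging adjacent identical symbols, the bang sequence is an alternating word in $\{X,Y\}$, so it is determined by its first letter and its length $n$. The interior switching points are regular, meaning $\phi_1=0$ and $\phi_2\neq0$. Lemma \ref{lem:PMP-sign-rule} then shows that switching occurs exactly at zeros of $\Phi$. Consequently every interior bang is a full switching-to-switching arc: each interior $X$-bang has length $\tau_X=\pi/\sin\gamma$ by Lemma \ref{lem:X-arc-propagation}, and each interior $Y$-bang has length $\ell\in(0,2\pi)$ by Lemma \ref{lem:Y-arc-propagation}.

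Next I would invoke the appendix result that an optimal extremal begins and ends with $X$. An alternating word starting and ending with $X$ has odd length $n=2k+1$ and takes the form $X\,Y\,X\,Y\cdots Y\,X$. If $k=0$ the word is the pure $X$ pattern, which is the trivial case. If $k=1$ the word is $XYX$, which has exactly two switchings. The task is therefore to rule out $k\ge2$.

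Now suppose $k\ge2$. Then the word has at least five bangs, and its second through fourth letters form the block $Y\,X\,Y$. This block is interior, because its outer $Y$-bangs are flanked by $X$-bangs on both sides, so every one of its endpoints is a regular switching point. Its middle $X$-bang is also interior and therefore has length $\tau_X$. Theorem \ref{thm:compress-YXY} applies to this block and replaces it by a single $X$-arc that produces the same reduced endpoint data in strictly shorter time. By Proposition \ref{prop:universal-switching-continuation}, the reduced continuation after the block is unchanged. The new trajectory reaches $\Sigma$ sooner, which contradicts time-optimality. When $k\ge3$ one could alternatively compress an interior $XYX$ block using Theorem \ref{thm:compress-XYX}, but the $YXY$ compression already covers every $k\ge2$. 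Hence $k\le1$: there are at most three bangs and at most two switchings, and the only nontrivial pattern is $XYX$.

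The main obstacle is to justify that the compressed reduced trajectory corresponds to a real admissible trajectory in $\psi$-space that still reaches $\Sigma$ at the terminal time. The compression arguments only track the reduced triple $(\phi_1,\phi_2,\phi_3)$, not $\psi$ itself. I would handle this exactly as the paper does: take Proposition \ref{prop:universal-switching-continuation} as the propagation rule, so that preserving the reduced data at the block endpoint preserves the remaining bang sequence, and hence the remaining durations, while the total time strictly decreases. A second point to check is that the first and last bangs may be partial arcs, which is why the compression must be applied only to interior blocks. The choice of the block in positions two through four guarantees this.
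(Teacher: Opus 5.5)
Your argument is essentially the paper's: the appendix endpoint lemmas fix the first and last bangs as $X$, and an interior three-bang block is ruled out by Theorem~\ref{thm:compress-YXY}. The deduction is therefore valid exactly to the extent those cited results are, and the $\psi$-space concern you raise as the main obstacle belongs to the compression theorems themselves, not to this step. Your ordering is in fact a little tighter than the paper's. Fixing both endpoints first forces odd length, so any pattern beyond $XYX$ has at least five bangs and a genuinely interior $YXY$ block in positions two to four. By contrast, the paper's claim that an interior three-bang block exists whenever there are more than three bangs fails for exactly four bangs, a case that only the endpoint lemmas actually exclude.
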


\begin{proof}
Assume for contradiction that a regular normal time-optimal bang--bang extremal has at least three switchings. Since the control takes values in $\{X,Y\}$, the corresponding bang sequence is alternating and has length at least four. Therefore it contains a consecutive three-bang block, necessarily of one of the two forms $YXY$ and $XYX$. If the full sequence has more than three bangs, then such a consecutive three-bang block can be chosen to be interior. But Theorem \ref{thm:compress-YXY} excludes interior $YXY$ blocks on a regular normal time-optimal extremal, and Theorem \ref{thm:compress-XYX} excludes interior $XYX$ blocks. This contradiction proves that no regular normal time-optimal bang--bang extremal can have more than two switchings.

Hence any such extremal has at most three bangs. In \ref{app:endpoint}, we show that the first bang must be $X$ and the last bang must also be $X$. Therefore the only possible bang patterns are $X$ and $XYX$. The one-bang pattern $X$ is only a degenerate purely global case. In the generic partial-search setting, the nontrivial regular time-optimal structure is therefore $XYX$.
\end{proof}

Theorem \ref{thm:at-most-two-switchings} is the structural core of the argument. Within the class of regular normal time-optimal extremals in the asymptotic continuous-limit problem, optimality forces the global--local--global arrangement and excludes all longer alternating sequences. It remains only to optimize the arc lengths within this $XYX$ family.

\begin{theorem}\label{thm:GRK-structural-optimality}
Within the asymptotic continuous-limit formulation of partial quantum search, the optimal operator ordering is global--local--global, namely $G_n\,G_m^{k_2}\,G_n^{k_1}$. With the optimal GRK parameters given by Eq. \eqref{eq:GRK-optimal-parameters-sec2}, this yields an asymptotically optimal partial-search algorithm.
\end{theorem}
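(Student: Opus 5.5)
The plan is to combine the structural result of Theorem~\ref{thm:at-most-two-switchings} with an explicit optimization of the arc lengths inside the surviving $XYX$ family, and then translate back from the continuous control problem to oracle counts.

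\textbf{Step 1: reduce to arc-length optimization.} The minimum of $T$ over admissible controls exists. The state space is compact, and the reachable set from $\ket{s_n}$ under two fixed $\mathfrak{so}(3)$ generators is closed for bounded time. Since the control set is the finite set $\{X,Y\}$, we can convexify it to $\mathrm{co}\{X,Y\}$ and invoke Filippov's theorem. The minimizer then satisfies the PMP.

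By Lemma~\ref{thm:no-singular-arcs}, no singular arc survives. Hence the minimizer is bang--bang. We should also argue that it is regular and normal, which needs two checks:
\begin{enumerate}
\item An abnormal multiplier would force $\langle p,A\psi\rangle=0$ on the optimal arc, and together with the transversality condition $p(T)\perp\Sigma$ this should be ruled out by a direct check.
\item A tangential switch with $\phi_1=\phi_2=0$ leads, via Lemma~\ref{lem:Y-arc-propagation}, to the degenerate case already excluded.
\end{enumerate}
Theorem~\ref{thm:at-most-two-switchings} then leaves only the patterns $X$ and $XYX$.

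\textbf{Step 2: discard the pattern $X$ and optimize $XYX$.} A pure $X$-arc reaches $\Sigma$ only after the full Grover time of order $\frac{\pi}{4}\sqrt N$. One exhibits an $XYX$ trajectory that is strictly shorter, so the pattern $X$ is not optimal.

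For the pattern $X(t_1)Y(t_2)X(t_3)$, write $e^{t_3X}e^{t_2Y}e^{t_1X}\ket{s_n}$ and impose $\langle u|\cdot\rangle=0$. Rescaling $t_1\sim k_1$, $t_2\sim k_2=\alpha\sqrt b$ reproduces the success constraint \eqref{eq:constraint-alpha-eta-sec2}, with the final arc of length $\mathcal O(1)$ in query units. Minimizing $t_1+t_2+t_3$ under this constraint, by a Lagrange multiplier in $(\eta,\alpha)$, gives exactly the conditions \eqref{eq:GRK-optimal-parameters-sec2}, as in the cited GRK optimization. Uniqueness of the interior critical point, together with boundary behavior ($\alpha\to0$ returns to the pure-$X$ cost), gives a global minimum.

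\textbf{Step 3: translate back to oracle counts.} Continuous times correspond to query numbers up to $\mathcal O(1)$ per switching. Since there are at most two switchings, the discretization error is $\mathcal O(1)$, which is negligible relative to the $\sqrt b$ saving. Hence GRK is asymptotically optimal.

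The main obstacle is Step 1, specifically justifying that the minimizer is a regular normal extremal so that Theorem~\ref{thm:at-most-two-switchings} applies. The plan is to compute $p(T)$ from transversality, $p(T)=\lambda\,\ket{u}$, and verify directly that the abnormal case $\lambda$-degeneracy contradicts $\langle p,X\psi\rangle=\langle p,Y\psi\rangle=0$ at $T$ unless $p\equiv 0$.
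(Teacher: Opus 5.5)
Your skeleton matches the paper's: Theorem~\ref{thm:at-most-two-switchings} leaves only $X$ or $XYX$, and then the arc lengths are optimized. The paper handles the second step by citing the discrete GRK optimization. You try to carry it out inside the continuum model, and that is where the argument breaks.

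One Grover step of either type corresponds to continuum time $2/\sqrt b$: angle $2\theta_1$ at speed $s=\sin\gamma$, respectively $2\theta_2$ at speed $1$. So a final global stage of $\mathcal O(1)$ queries is $e^{t_3X}$ with $t_3\to0$, i.e.\ the identity. Since $\bra{u}Y=0$, the condition $\langle u|e^{t_3X}e^{t_2Y}e^{t_1X}|s_n\rangle=0$ reduces to $\cos\gamma\cos(st_1)=0$. That forces the full Grover first arc with no saving; it does not give Eq.~\eqref{eq:constraint-alpha-eta-sec2}.

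The GRK constraint comes from the improper part of the last discrete $G_n$. To leading order, $G_n$ is the rotation by $2\theta_1$ about $n=(0,\cos\gamma,-\sin\gamma)^\top$ (which spans $\ker X$), composed with the reflection $R=I-2nn^\top$. Since $\det R=-1$, $R\neq e^{tX}$ for every $t$. Concretely, after $X$ for time $\pi/(2s)-2\eta$ and $Y$ for time $2\alpha$, the state is $\approx(\cos2\alpha,-\sin2\alpha,2\eta s)^\top$. Then $\langle u|R\psi\rangle\approx2s(\eta-\sin2\alpha)$, and its vanishing is the $K\to\infty$ form of the GRK constraint. A genuine $X$-arc from that state instead needs $t_3\approx2\eta/\cos2\alpha$, i.e.\ order $\sqrt b$ queries, and the total then exceeds $\frac{\pi}{4}\sqrt N$.

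This cannot be fixed by choosing other arc lengths. Take the leading-order $\ket{s_n}=(0,s,c)^\top$ and set $\varphi=st_1$, $\sigma=st_3$, $\Psi=\varphi+\sigma$. The hitting condition for $X(t_1)Y(\tau)X(t_3)$ is exactly
\[
\cos\Psi=s\sin\tau\,(\sin\Psi-\sin\varphi)-(1-\cos\tau)\bigl[\sin\varphi\sin\sigma+s^2\cos\varphi\,(1-\cos\sigma)\bigr].
\]
Since $\sin(s\tau)\ge s\sin\tau$ whenever $s\tau\le\pi/2$, this forces $\Psi/s+\tau\ge\pi/(2s)$. So in the $\mathfrak{so}(3)$ problem no $XYX$ trajectory is shorter than pure $X$. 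Your step ``exhibit a strictly shorter $XYX$'' cannot be carried out. Your Step~3 claim, that converting continuous times to query counts costs only $\mathcal O(1)$, is exactly where the whole $(\eta-\alpha)\sqrt b$ saving sits; the paper's parity remark also treats this as an $\mathcal O(1)$ effect.

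The missing idea is a continuum model in which $R$ can be applied at zero continuum cost, with the switching analysis redone there. A final application of $R$, for instance, replaces the terminal set $\Sigma$ by $\Sigma\cup R\Sigma$. Neither your proposal nor the paper's one-line identification supplies this.

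Two smaller points in Step~1:
\begin{enumerate}
\item Filippov's theorem gives a minimizer of the relaxed problem only. There, the set $\phi_1=\phi_2=\phi_3=0$ is invariant under every control $uX+(1-u)Y$. Lemma~\ref{thm:no-singular-arcs} is proved for $\{X,Y\}$-valued controls, so it does not make that relaxed minimizer bang--bang.
\item With $p(T)\propto\ket{u}$, one has $\langle p(T),Y\psi(T)\rangle=0$ automatically. The abnormal condition at $T$ then only forces $\psi_1(T)=0$ (or a final $Y$-arc), not $p\equiv0$.
\end{enumerate}
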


\begin{proof}
Theorem \ref{thm:at-most-two-switchings} shows that every regular normal time-optimal extremal has the pattern $XYX$. The $X$-arc corresponds to a global Grover stage and the $Y$-arc to a local Grover stage. Therefore the only asymptotically optimal operator ordering is global--local--global, namely $G_n\,G_m^{k_2}\,G_n^{k_1}$.

It remains to optimize the durations within this fixed ordering. This is exactly the standard GRK optimization recalled in Sec. \ref{subsec:GRK}. The optimal parameters are given by \cite{Korepin2005OptimizationPartialSearch}, namely by Eq. \eqref{eq:GRK-optimal-parameters-sec2}. Hence the GRK operator $G_n\,G_m^{k_2}\,G_n^{k_1}$ is asymptotically optimal.
\end{proof}

A discrete parity issue remains in passing from the reduced continuous-limit control problem to the original operator sequence built from $G_n$ and $G_m$. In the three-dimensional
reduction, $G_m$ is orientation-preserving, whereas $G_n$ has $\det G_n=-1$ and hence lies
in $O(3)\setminus SO(3)$. The continuous-limit formulation used here is built from the
infinitesimal generators $X,Y\in\mathfrak{so}(3)$ and therefore describes only the connected,
orientation-preserving part of the dynamics. Consequently, the PMP argument proves the
optimality of the $X$ and $XYX$ structures within the asymptotic continuous-limit problem,
but it does not by itself distinguish the discrete parity sectors associated with an odd or even
total number of global Grover steps.

This does not affect the leading asymptotic query count proved here. Indeed, the parity of the
global stage contributes only a discrete endpoint correction of order $\mathcal O(1)$ when continuous
arc lengths are converted to integer iteration numbers. Hence the present argument establishes
the global--local--global structure and the GRK parameters as asymptotically optimal up to
such $\mathcal O(1)$ parity corrections. This is consistent with the group-theoretic formulation of
partial search, where the reflection part of the global operator is separated from the connected
$SO(3)$ dynamics \cite{KorepinVallilo2006GroupTheoreticalPartialSearch}, and with the discrete large-block analysis of
global--local--global sequences in \cite{KorepinLiao2006QuestFastPartialSearch}, which shows that the nontrivial
optimal discrete realization occurs in the odd-parity sector and is realized by the GRK-type final global step. 


\section{\label{sec:conclusion}Conclusion and discussion}

We have reformulated quantum partial search in the large-block regime as a time-optimal switching problem on the reduced three-dimensional real space, with two admissible generators \(X\) and \(Y\) corresponding to the continuous limits of the global and local Grover operators. In this formulation, the ordering problem becomes a geometric control problem. By combining the PMP with the closed reduced switching dynamics, we proved that regular normal time-optimal extremals contain no singular subarcs of positive length and no compressible interior three-bang blocks. It follows that any such extremal has at most two switchings. The only nontrivial time-optimal pattern is therefore \(XYX\). Since \(X\) and \(Y\) represent global and local Grover stages, respectively, this gives the global--local--global structure as the unique nontrivial asymptotically optimal ordering. Since GRK is optimal in the global--local--global structure, we confirm that the GRK algorithm is asymptotically optimal in the large-block regime.

A related but distinct optimization criterion is the expected number of oracle queries under a restart strategy: the algorithm may be stopped before near-unit success probability and repeated after failure \cite{boyer1996tight}. Such a strategy has recently been analyzed for GRK partial search in Ref.~\cite{JiangWangZhangKorepin2026ExactBoundsPartialSearch}. Within the GRK family, the minimal expected number of oracle queries is \(E_{\min}\simeq 0.69\sqrt N-0.4054\sqrt b\) for \(m\le n/2\). This expected-query optimization is therefore an analogue of the punctuated Grover search, while the present paper focuses on the bang--bang structural optimality of the GRK ordering.

Recent work also shows that the quantum partial search algorithm based on the GRK operator is not efficient, compared with the classical case, when \(m>n/2\) \cite{JiangWangZhangKorepin2026ExactBoundsPartialSearch}. In the present setting, however, we restrict ourselves to the standard partial diffusion operator used in partial search. Whether alternative diffusion operators can offer a further advantage in the regime \(m>n/2\) remains an open question. In this work we use the oracle query count as the performance metric. On actual quantum hardware, however, circuit depth is often the more relevant resource \cite{Preskill2018quantumcomputingin}. Depth optimization for full quantum search has been investigated in recent years \cite{Zhang2020DepthOptimizationQuantumSearch,Brianski2021IntroducingStructure,Campos2024DepthScalingQAOA}. One may therefore expect GRK to remain optimal in depth only in the regime where the oracle depth dominates the total circuit cost. Otherwise, the design of depth-efficient partial search algorithms becomes the more practically relevant problem.

\section*{Data availability statement}

No new data were created or analysed in this study.

\ack

This work was supported by the NSFC (Grants No.12305028, No.12275215, and No.12247103), and the Youth Innovation Team of Shaanxi Universities. KZ is supported by the China Postdoctoral Science Foundation under Grant Number 2025M773421, Shaanxi Province Postdoctoral Science Foundation under Grant Number 2025BSHYDZZ017, and Scientific Research Program Funded by Education Department of Shaanxi Provincial Government (Program No.24JP186). VK is funded by the U.S. Department of Energy, Office of Science, National Quantum Information Science Research Centers, Co-Design Center for Quantum Advantage (C2QA) under Contract No. DE-SC0012704.

\appendix

\section{\label{app:endpoint}Endpoint arc lemmas}

In this appendix we collect two endpoint lemmas for the reduced continuous-limit control problem of Sec.~\ref{subsec:time_optimal}.

\begin{lemma}\label{lem:last-arc-global-app}
Any nontrivial time-optimal trajectory reaching \(\Sigma\) must end with an \(X\)-arc.
\end{lemma}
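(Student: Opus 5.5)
The plan is a direct geometric argument that uses only the block structure of the generator \(Y\) in Eq.~\eqref{eq:XY-generators-sec2}; the PMP machinery is not needed. The key observation is that \(Y\) acts only on \(\mathrm{span}\{\ket{t},\ket{ntt}\}\). Its third row and third column vanish, so \(Y\ket{u}=0\) and \(Y^\top\ket{u}=0\). Along any \(Y\)-arc the \(u\)-amplitude is therefore conserved:
\begin{equation}
\frac{d}{dt}\langle u|\psi(t)\rangle=\langle u|Y\psi(t)\rangle=-\langle Y u|\psi(t)\rangle=0 .
\end{equation}
Since the terminal set \(\Sigma\) in Eq.~\eqref{eq:Sigma-sec2} is exactly \(\{\langle u|\psi\rangle=0\}\), a \(Y\)-arc can neither enter nor leave \(\Sigma\). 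It only rotates the state inside a plane of constant \(u\)-component.

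With this in hand, I argue by contradiction. Take a time-optimal trajectory on \([0,T]\) with \(T>0\), a piecewise-constant control as fixed in Sec.~\ref{subsec:time_optimal}, and \(\psi(T)\in\Sigma\). Suppose its last bang is a \(Y\)-arc on \([t_*,T]\) with \(t_*<T\). By the conservation law, \(\langle u|\psi(t_*)\rangle=\langle u|\psi(T)\rangle=0\), so \(\psi(t_*)\in\Sigma\). The truncated control on \([0,t_*]\) is then admissible and hits \(\Sigma\) in time \(t_*<T\), which contradicts minimality of \(T\).

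It remains to treat the degenerate case \(t_*=0\), where the whole trajectory is a single \(Y\)-arc. Then \(\langle u|\psi(T)\rangle=\langle u|s_n\rangle=\cos\gamma\). By Eq.~\eqref{eq:sn-reduced-sec2} this is nonzero because \(\sin\gamma=1/\sqrt K<1\), so \(\psi(T)\notin\Sigma\), which is again a contradiction. This case is also where the hypothesis ``nontrivial'' is used: it excludes \(T=0\).

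No step should be a real obstacle. The only point needing care is that the last bang has positive length, which follows from the restriction to piecewise-constant controls (a maximal final interval with constant generator exists). As an optional consistency check I would add the transversality remark. The PMP endpoint condition forces \(p(T)\) to be parallel to \(\ket{u}\), the normal of \(\Sigma\), so \(\langle p(T),Y\psi(T)\rangle=0\). The normalization \(H\equiv0\) then excludes \(Y\) as the maximizing control at \(T\), which agrees with the truncation argument.
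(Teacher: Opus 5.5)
Your proof is correct and is essentially the paper's argument: the paper also observes that $\langle u|Y=0$, so $\langle u|\psi\rangle$ is constant on every $Y$-arc, and concludes that a final $Y$-arc cannot bring the state into $\Sigma$. Your version additionally spells out two steps the paper leaves implicit: the truncation at the start of the last $Y$-arc, where the state already lies in $\Sigma$, and the pure-$Y$ case, excluded because $\langle u|s_n\rangle=\cos\gamma\neq 0$.
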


\begin{proof}
Since
\begin{equation}
Y=
\begin{pmatrix}
0 & 1 & 0\\
-1 & 0 & 0\\
0 & 0 & 0
\end{pmatrix},
\end{equation}
we have
\begin{equation}
\langle u|Y=(0\ 0\ 1)
\begin{pmatrix}
0 & 1 & 0\\
-1 & 0 & 0\\
0 & 0 & 0
\end{pmatrix}
=(0\ 0\ 0).
\end{equation}
Hence along any \(Y\)-arc,
\begin{equation}
\frac{d}{dt}\langle u|\psi(t)\rangle
=\langle u|\dot\psi(t)\rangle
=\langle u|Y|\psi(t)\rangle
=0.
\end{equation}
Therefore \(\langle u|\psi(t)\rangle\) is constant on every \(Y\)-arc. A final \(Y\)-arc cannot change \(\langle u|\psi\rangle\), and so cannot steer a point outside \(\Sigma\) into \(\Sigma\). Thus any nontrivial time-optimal trajectory must terminate with an \(X\)-arc.
\end{proof}

\begin{lemma}\label{lem:first-arc-global-app}
Every regular time-optimal trajectory begins with an \(X\)-arc.
\end{lemma}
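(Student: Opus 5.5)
The plan is to combine the switching-count bound with an explicit geometric comparison. By Theorem \ref{thm:at-most-two-switchings}, a regular normal time-optimal extremal has at most three bangs, and by Lemma \ref{lem:last-arc-global-app} the last bang is $X$. So if the first bang were $Y$, the only possible pattern is $Y(\ell)\,X(\tau)$ with $\ell>0$, a single switching. Arguing by contradiction, it therefore suffices to show that every $YX$ trajectory from $\ket{s_n}$ to $\Sigma$ takes strictly longer than some admissible trajectory beginning with $X$. The candidate competitor is the pure $X$-arc, and if needed the $XYX$ family.

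The key geometric input is the axis of the global generator. A direct check shows $X n=0$ for $n=(0,c,-s)$. Thus every $X$-arc is a rotation with angular speed $\sqrt{s^4+s^2c^2}=s$ about the axis $n$. In the large-block limit $\sin\theta_2\to0$, so $\ket{s_n}\to(0,s,c)$, which is orthogonal to $n$. Hence the pure $X$-orbit of $\ket{s_n}$ is a great circle. The $u$-component along this circle is $c\cos(st)$, so $\Sigma$ is reached at time $T_X=\pi/(2s)$ in this parametrization (the $\tfrac{\pi}{4}\sqrt N$ of Grover).

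A $Y$-arc of length $\ell$ instead rotates the $(t,ntt)$ components, giving
\begin{equation*}
\psi(\ell)=(s\sin\ell,\ s\cos\ell,\ c).
\end{equation*}
This step leaves $\langle u|\psi\rangle=c$ unchanged and acquires the axial component
\begin{equation*}
\langle n,\psi(\ell)\rangle=sc(\cos\ell-1)\le 0.
\end{equation*}
The subsequent $X$-rotation preserves this axial component. So the $X$-orbit is now a small circle of radius $r(\ell)=\sqrt{1-s^2c^2(1-\cos\ell)^2}<1$. The $u$-coordinate along it oscillates around the offset $-s(\cos\ell-1)sc\cdot(-1)$, i.e.\ around a positive shift $s^2c(1-\cos\ell)$ coming from the axial part.

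I will write the $u$-component along the $X$-arc explicitly as
\begin{equation*}
\langle u|\psi\rangle=\alpha(\ell)+\beta(\ell)\cos(s\tau)+\delta(\ell)\sin(s\tau).
\end{equation*}
Here $\alpha(\ell)=-s\,\langle n,\psi(\ell)\rangle$ is the axial offset. The coefficients $\beta(\ell)$ and $\delta(\ell)$ come from projecting $\psi(\ell)$ onto the plane orthogonal to $n$. I will then solve for the first zero $\tau(\ell)$. The aim is to show $\ell+\tau(\ell)>\pi/(2s)$ for every $\ell\in(0,2\pi)$. Heuristically this holds because the positive offset $\alpha(\ell)$ means the orbit must rotate past the quarter turn to reach $\langle u|\psi\rangle=0$. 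The phase shift caused by $\delta(\ell)$ is only of order $s\sin\ell$, which should be dominated, after the angle is divided by the small frequency $s$, by the added $Y$-time $\ell$.

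The main obstacle is exactly this last inequality: making the comparison $\ell+\tau(\ell)>T_X$ uniform in $\ell$, and not merely for small $\ell$. I would first try to expand to first order in $s=1/\sqrt K$, which is the natural regime since the frequencies $s$ and $1$ are well separated. Then I would check the endpoints $\ell\to0$ and $\ell\to2\pi$, where $\psi(\ell)\to\ket{s_n}$ and the $YX$ time exceeds $T_X$ by about $\ell$ or $2\pi$. If the direct comparison with the pure $X$-arc fails for some intermediate $\ell$, the fallback is to compare $YX$ against the optimal $XYX$ time $T_X-(\eta_K-\alpha_K)$ of Sec.~\ref{subsec:GRK}. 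The parity/symmetry remark of Sec.~\ref{subsec:final_structure} suggests that a leading $Y$-arc can never gain query time, because it does not change $\langle u|\psi\rangle$ while spending positive time. This is the same mechanism used in Lemma \ref{lem:last-arc-global-app}.
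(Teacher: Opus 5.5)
Your route is different from the paper's, and it works. The one step you leave open does close, as shown below.

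\textbf{Comparison of routes.} The paper argues locally at \(t=0\). It compares \(e^{\tau X}\ket{s_n}\) with \(e^{\tau Y}\ket{s_n}\) for small \(\tau\) and notes that the \(\ket{u}\)-amplitude is smaller under \(X\). It then asserts that this gain can be turned into a strictly shorter hitting time by an \(O(\tau^2)\) adjustment of later switching times. That argument does not depend on the switching count, but its last step is asserted rather than computed. For the leading-order \(\ket{s_n}=(0,s,c)^\top\) one has \(X\ket{s_n}=Y\ket{s_n}=(s,0,0)^\top\). So the two prefixes agree to first order and differ only at order \(\tau^2\), and at that order they differ in the \(\ket{ntt}\)-amplitude too.

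You instead take the at-most-three-bangs bound of Theorem~\ref{thm:at-most-two-switchings} and combine it with Lemma~\ref{lem:last-arc-global-app} to reduce to the single pattern \(Y(\ell)X(\tau)\). You then compute that pattern's hitting time exactly. Citing the theorem is not circular, because its proof of the bound uses only Theorems~\ref{thm:compress-YXY} and~\ref{thm:compress-XYX}. The cost is that your lemma inherits whatever those compression theorems actually establish. The gain is that the endpoint claim becomes an explicit inequality.

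\textbf{Closing the inequality.} The direct comparison with the pure \(X\)-arc never fails, so no fallback is needed. Put \(A=1-\cos\ell\) and \(\varphi=s\tau\). Your coefficients are then \(\alpha=s^2cA\), \(\beta=c(1-s^2A)\) and \(\delta=-cs\sin\ell\). Write \(\beta\cos\varphi+\delta\sin\varphi=cR\cos(\varphi+\psi_0)\), where \(R=\sqrt{1-s^2c^2A^2}\) is your \(r(\ell)\) and \(\tan\psi_0=s\sin\ell/(1-s^2A)\). Since the amplitude starts at \(1\) at \(\varphi=0\), the first positive zero is \(\varphi(\ell)=\pi/2+\arcsin(s^2A/R)-\psi_0\). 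Hence
\begin{equation*}
s\bigl(\ell+\tau(\ell)\bigr)-\frac{\pi}{2}=s\ell+\arcsin\frac{s^2A}{R}-\psi_0 .
\end{equation*}
For \(\ell\in[\pi,2\pi)\) one has \(\psi_0\le 0\), so this is at least \(s\ell>0\). For \(\ell\in(0,\pi)\), use \(\arcsin y\ge y\), \(R\le 1\) and \(\arctan x\le x\). This gives
\begin{equation*}
s\bigl(\ell+\tau(\ell)\bigr)-\frac{\pi}{2}\ \ge\ \frac{s}{1-s^2A}\Bigl[(\ell-\sin\ell)+sA\,(1-s\ell-s^2A)\Bigr]>0
\end{equation*}
whenever \(s\pi+2s^2<1\), e.g.\ \(K\ge 16\). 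Arcs with \(\ell\ge 2\pi\) reduce to \(\ell \bmod 2\pi\), since \(e^{2\pi Y}=I\).

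\textbf{Two cautions.} First, this is your heuristic made exact: time \(\ell\) spent versus phase advance \(\approx s\sin\ell\), plus the positive offset. A bare expansion \(s(\ell-\sin\ell)+O(s^2)\) with an unsigned remainder would not be uniform as \(\ell\to0\), because \(\ell-\sin\ell=O(\ell^3)\). What controls small \(\ell\) is the signed term \(s^2A\approx s^2\ell^2/2\) coming from \(\alpha>0\), so keep the exact formula.

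Second, drop the fallback claim that a leading \(Y\)-arc cannot help because it leaves \(\langle u|\psi\rangle\) unchanged. Lemma~\ref{lem:last-arc-global-app} uses only that a \emph{final} \(Y\)-arc cannot enter \(\Sigma\). The interior \(Y\)-arc of GRK also leaves \(\langle u|\psi\rangle\) unchanged, yet it saves time.
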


\begin{proof}
In the large-block regime, the initial state \(|s_n\rangle\) in the reduced basis \((|t\rangle,|ntt\rangle,|u\rangle)\) has the leading-order form
\begin{equation}
|s_n\rangle \sim
\begin{pmatrix}
0\\
\sin\gamma\\
\cos\gamma
\end{pmatrix}.
\end{equation}
For a short time \(\tau>0\), compare the trajectories starting from \(x_0=|s_n\rangle\) under pure \(X\)- and pure \(Y\)-evolution:
\begin{equation}
\psi_X(\tau)=e^{\tau X}x_0,\qquad \psi_Y(\tau)=e^{\tau Y}x_0.
\end{equation}

Since \(Y\) rotates only in the \((|t\rangle,|ntt\rangle)\)-plane and leaves the \(|u\rangle\)-component unchanged,
\begin{equation}
\psi_Y(\tau)=
\begin{pmatrix}
\sin\gamma\,\sin\tau\\
\sin\gamma\,\cos\tau\\
\cos\gamma
\end{pmatrix},
\qquad
\langle u|\psi_Y(\tau)\rangle=\cos\gamma.
\end{equation}
Let \(s=\sin\gamma\) and \(c=\cos\gamma\). From Eq.~\eqref{eq:XY-generators-sec2}, the \(X\)-arc equations are
\begin{equation}
\dot\psi_1=s^2\psi_2+sc\,\psi_3,\qquad
\dot\psi_2=-s^2\psi_1,\qquad
\dot\psi_3=-sc\,\psi_1.
\end{equation}
With \(\psi(0)=x_0=(0,s,c)^T\), we get
\begin{equation}
\dot\psi_1(0)=s^2\cdot s+sc\cdot c=s,
\qquad
\psi_1(0)=0.
\end{equation}
Hence \(\psi_1\) satisfies
\begin{equation}
\ddot\psi_1+s^2\psi_1=0,\qquad \psi_1(0)=0,\qquad \dot\psi_1(0)=s,
\end{equation}
so \(\psi_1(\tau)=\sin(s\tau)\). Integrating the remaining equations yields
\begin{equation}
\psi_X(\tau)=
\begin{pmatrix}
\sin(s\tau)\\
s\cos(s\tau)\\
c\cos(s\tau)
\end{pmatrix},
\qquad
\langle u|\psi_X(\tau)\rangle=c\cos(s\tau)<c
\end{equation}
for every \(\tau>0\). Therefore
\begin{equation}
\langle u|\psi_X(\tau)\rangle < \langle u|\psi_Y(\tau)\rangle.
\end{equation}
So, for the same short duration, an \(X\)-prefix moves the state strictly closer to the terminal plane \(\Sigma\), whereas a \(Y\)-prefix does not.

Assume that a regular time-optimal trajectory starts with a \(Y\)-arc of positive length. Replacing a sufficiently short initial \(Y\)-prefix by an \(X\)-prefix of the same duration strictly decreases the \(|u\rangle\)-component. By smooth dependence of the endpoint map on the switching times, and by regularity of the extremal, this improvement can be compensated by an \(O(\tau^2)\) perturbation of the later switching times while preserving the terminal condition \(\psi(T)\in\Sigma\). The resulting nearby trajectory reaches \(\Sigma\) in strictly smaller total time, contradicting optimality. Hence the first arc must be \(X\).
\end{proof}

We analyze the endpoint arc structure of the control problem, which corresponds to the first and last operators in the optimal partial-search sequence. Lemmas~\ref{lem:last-arc-global-app} and~\ref{lem:first-arc-global-app} are consistent with the discrete optimization of Ref.~\cite{KorepinLiao2006QuestFastPartialSearch}, which shows that the first and last search operators are always global.


\begin{thebibliography}{10}
\expandafter\ifx\csname url\endcsname\relax
  \def\url#1{{\tt #1}}\fi
\expandafter\ifx\csname urlprefix\endcsname\relax\def\urlprefix{URL }\fi
\providecommand{\eprint}[2][]{\url{#2}}

\bibitem{grover1996fast}
Grover L~K 1996 A fast quantum mechanical algorithm for database search {\em Proceedings of the 28th ACM Symposium on Theory of Computing (STOC)\/} pp 212--219

\bibitem{Grover1997needle}
Grover L~K 1997 {\em Physical Review Letters\/} {\bf 79} 325

\bibitem{boyer1996tight}
Boyer M, Brassard G, H{\o}yer P and Tapp A 1998 {\em Fortschritte der Physik\/} {\bf 46} 493--505

\bibitem{zalka1999grover}
Zalka C 1999 {\em Physical Review A\/} {\bf 60} 2746--2751

\bibitem{Grassl2016ApplyingGroverAES}
Grassl M, Langenberg B, Roetteler M and Steinwandt R 2016 Applying grover's algorithm to aes: Quantum resource estimates {\em Post-Quantum Cryptography\/} ({\em Lecture Notes in Computer Science\/} vol 9606) ed Takagi T (Cham: Springer) pp 29--43

\bibitem{Jaques2020GroverAESLowMC}
Jaques S, Naehrig M, Roetteler M and Virdia F 2020 Implementing grover oracles for quantum key search on aes and lowmc {\em Advances in Cryptology -- EUROCRYPT 2020\/} ({\em Lecture Notes in Computer Science\/} vol 12106) ed Canteaut A and Ishai Y (Springer) pp 280--310

\bibitem{Jang2025QuantumAnalysisAES}
Jang K, Baksi A, Kim H, Song G, Seo H and Chattopadhyay A 2025 {\em IACR Communications in Cryptology\/} {\bf 2} 1--57

\bibitem{Chuang1998}
Chuang I~L, Gershenfeld N and Kubinec M 1998 {\em Physical Review Letters\/} {\bf 80} 3408--3411

\bibitem{zhang2021implementation}
Zhang K, Rao P, Yu K, Lim H and Korepin V 2021 {\em Quantum Information Processing\/} {\bf 20} 233

\bibitem{pokharel2024better}
Pokharel B and Lidar D~A 2024 {\em npj Quantum Information\/} {\bf 10} 23

\bibitem{GroverRadhakrishnan2005PartialSearch}
Grover L~K and Radhakrishnan J 2005 Is partial quantum search of a database any easier? {\em Proceedings of the 17th Annual ACM Symposium on Parallelism in Algorithms and Architectures\/} SPAA '05 (New York, NY, USA: Association for Computing Machinery) pp 186--194

\bibitem{Korepin2005OptimizationPartialSearch}
Korepin V~E 2005 {\em Journal of Physics A: Mathematical and General\/} {\bf 38} L731--L738

\bibitem{KorepinGrover2006SimplePartialSearch}
Korepin V~E and Grover L~K 2006 {\em Quantum Information Processing\/} {\bf 5} 5--10

\bibitem{KorepinVallilo2006GroupTheoreticalPartialSearch}
Korepin V~E and Vallilo B~C 2006 {\em Progress of Theoretical Physics\/} {\bf 116} 783--793

\bibitem{Choi2007QuantumPS}
Choi B~S and Korepin V~E 2007 {\em Quantum Information Processing\/} {\bf 6} 243--254

\bibitem{zhong2009quantum}
Zhong P~C, Bao W~S and Wei Y 2009 {\em Chinese Physics Letters\/} {\bf 26} 020301

\bibitem{Zhang2017QuantumPS}
Zhang K and Korepin V~E 2017 {\em Quantum Information Processing\/} {\bf 17} 143

\bibitem{ChoiWalkerBraunstein2007SureSuccessPartialSearch}
Choi B~S, Walker T~A and Braunstein S~L 2007 {\em Quantum Information Processing\/} {\bf 6} 1--8

\bibitem{YeWang2025DeterministicPartialSearchOneSixteenth}
Ye S~M and Wang Y~L 2025 {\em Physics Letters A\/} {\bf 537} 130325

\bibitem{KorepinLiao2006QuestFastPartialSearch}
Korepin V~E and Liao J 2006 {\em Quantum Information Processing\/} {\bf 5} 209--226

\bibitem{JiangWangZhangKorepin2026ExactBoundsPartialSearch}
Jiang Y~B, Wang X~H, Zhang K and Korepin V 2026 {\em Physical Review A\/} {\bf 114} 012412

\bibitem{Pontryagin1962}
Pontryagin L~S, Boltyanskii V~G, Gamkrelidze R~V and Mishchenko E~F 1962 {\em The Mathematical Theory of Optimal Processes\/} (New York: Interscience Publishers)

\bibitem{Liberzon2012}
Liberzon D 2012 {\em Calculus of Variations and Optimal Control Theory: A Concise Introduction\/} (Princeton, NJ: Princeton University Press)

\bibitem{AgrachevSachkov2004}
Agrachev A~A and Sachkov Y~L 2004 {\em Control Theory from the Geometric Viewpoint\/} ({\em Encyclopaedia of Mathematical Sciences\/} vol~87) (Berlin, Heidelberg: Springer)

\bibitem{DongPetersen2010}
Dong D and Petersen I~R 2010 {\em IET Control Theory \& Applications\/} {\bf 4} 2651--2671

\bibitem{KhanejaGlaserBrockett2001}
Khaneja N, Brockett R and Glaser S~J 2001 {\em Physical Review A\/} {\bf 63} 032308

\bibitem{Glaser2015TrainingSchrodinger}
Glaser S~J, Boscain U, Calarco T, Koch C~P, K{\"o}ckenberger W, Kosloff R, Kuprov I, Luy B, Schirmer S, Schulte-Herbr{\"u}ggen T, Sugny D and Wilhelm F~K 2015 {\em The European Physical Journal D\/} {\bf 69} 279

\bibitem{Stefanatos2020RolandCerf}
Stefanatos D and Paspalakis E 2020 {\em Journal of Physics A: Mathematical and Theoretical\/} {\bf 53} 115304

\bibitem{BoscainMasonSigalotti2021}
Boscain U, Mason P and Sigalotti M 2021 {\em PRX Quantum\/} {\bf 2} 030203

\bibitem{CoddingtonLevinson1955}
Coddington E~A and Levinson N 1955 {\em Theory of Ordinary Differential Equations\/} (New York: McGraw-Hill)

\bibitem{Hartman2002}
Hartman P 2002 {\em Ordinary Differential Equations\/} (Philadelphia: SIAM) ISBN 9780898715101

\bibitem{Preskill2018quantumcomputingin}
Preskill J 2018 {\em Quantum\/} {\bf 2} 79

\bibitem{Zhang2020DepthOptimizationQuantumSearch}
Zhang K and Korepin V 2020 {\em Physical Review A\/} {\bf 101} 032346

\bibitem{Brianski2021IntroducingStructure}
Bria{\'n}ski M, Gwinner J, Hlembotskyi V, Jarnicki W, Pli{\'s} S and Szady A 2021 {\em Physical Review A\/} {\bf 103} 062425

\bibitem{Campos2024DepthScalingQAOA}
Campos E, Rabinovich D and Uvarov A 2024 {\em Physical Review A\/} {\bf 110} 012428

\end{thebibliography}

\providecommand{\noopsort}[1]{}\providecommand{\singleletter}[1]{#1}%
\providecommand{\newblock}{}

\end{document}